\newtheorem{definition}{Definition}
\newtheoremstyle{bfnote}%
{}{}%
{\itshape}{}%
{\bfseries}{.}%
{ }%
{\thmname{#1}\thmnumber{ #2}\thmnote{ (#3)}}
\theoremstyle{bfnote}
\newtheorem{thm}{Theorem}
\newtheorem{lemma}{Lemma}
\title{\LARGE \bf Equilibrium-Independent  Stability Analysis for Distribution Systems with Lossy Transmission Lines}
\author{Wenqi Cui and Baosen Zhang
\thanks{Department of Electrical and Computer Engineering, University of Washington Seattle, WA 98195, USA	 \{wenqicui, zhangbao\}@uw.edu}%
\thanks{The authors are supported in part by the National Science Foundation grants ECCS-1930605 and ECCS-2153937.}}
\begin{document}
\maketitle
\thispagestyle{empty}
\pagestyle{empty}

\begin{abstract}
Power distribution systems are becoming much more active with increased penetration of distributed energy resources. Because of the intermittent nature of these resources, the stability of distribution systems under large disturbances and time-varying conditions is becoming a key issue in practical operations. Because the transmission lines in distribution systems are lossy, standard approaches in power system stability analysis do not readily apply and the understanding of transient stability remains open even for simplified models.

This paper proposes a novel equilibrium-independent transient stability analysis of distribution systems with lossy lines. We certify network-level stability by breaking the network into subsystems, and by looking at the equilibrium-independent passivity of each subsystem,  the network stability is certified through a diagonal stability property of the interconnection matrix. This allows the analysis scale to large networked systems with time-varying equilibria. The proposed method gracefully extrapolates between lossless and lossy systems, and provides a simple yet effective approach to optimize control efforts with guaranteed stability regions. Case studies verify that the proposed method is much less conservative than existing approaches and also scales to large systems.


\end{abstract}


\section{Introduction}
Distributed energy resources (DERs) such as rooftop solar, electric vehicles and battery storage devices are increasingly entering the power distribution systems. These devices have intermittent outputs and often exhibit large and fast ramping variations, bringing larger disturbances to the system~\cite{xu2019data,ross2020method}.
Therefore, stability of distribution systems under time-varying conditions and large disturbances is becoming a key question in their operations~\cite{zhang2016transient}.



We are mainly interested in the ability of a system to converge to an acceptable equilibrium following large disturbances~\cite{chiang1989study,sastry2013nonlinear}. In power systems, this is often called transient stability analysis. Most of the time,  transmission lines\footnote{ Power lines in the distribution system is also called transmission lines.} are assumed to be lossless (i.e., the lines are purely inductive with zero resistances). This significantly simplifies the mathematical analysis and allows for explicit constructions of energy functions for  microgrids~\cite{schiffer2014conditions, de2017bregman}, transmission systems~\cite{cui2021reinforcement, sauer2017power} and network-preserved differential-algebraic models~\cite{chiang2011direct, de2016lyapunov}. However, the transmission lines in distribution systems have non-negligible resistances~\cite{kersting1991radial}. More precisely, the $r/x$ ratios of the lines are not very small and the lines are called ``lossy''~\cite{robbins2012two,Zhang15}. For lossy systems, transient stability becomes a much harder problem and remains open even for simplified models~\cite{ zhang2016transient, huang2021neural}. 


A main difficulty in transient stability analysis for lossy networks is the lack of a good Lyapunov function (or energy function)~\cite{chiang1989study,sastry2013nonlinear}. 
Existing explicit constructions require all the lines to have the same $r/x$ ratios~\cite{de2017bregman}. In more general cases, a classical approach is to use path-dependent integrals to construct Lyapunov functions, but these integrals are not always well-defined and rely on knowing the trajectories of the states~\cite{chiang1989study}.
 Some works use linear matrix inequalities (LMIs) to find Lyapunov functions by relaxing sinusoidal AC power flow equations~\cite{zhang2016transient,vu2015lyapunov}. These relaxations bound sinusoidal functions with linear or quadratic ones, but the bound can be loose and lead to conservative stability assessments. A candidate Lyapunov function can also be found via Sum Of Squares (SOS) programming techniques~\cite{anghel2013algorithmic}, but the computation complexity grows quickly with increased problem size. This makes the method difficult to scale to moderate or large systems. More recently, attempts have been made to learn a Lyapunov function parameterized by neural networks~\cite{huang2021neural, cui2021lyapunov}. However, it is challenging to verify that the learned neural networks are actually Lyapunov functions.


Apart from the challenges in scalability, existing approaches only apply a single equilibrium at a time~\cite{zhang2016transient,huang2021neural, miao2019modeling}. Because of frequent  changes to DERs' setpoints,  equilibria are time-varying. Hence, it is essential to characterize stability for a set of possible equilibria. In addition,  the power electronics on the DERs allow their damping coefficients to be adjusted~\cite{johnson2013synchronization,cui2021lyapunov}. But optimizing these coefficients using existing approaches are nontrivial, since they involve solving complicated nonconvex problems. Therefore, the coefficients often are tuned slowly by trial and error, making the design process cumbersome and difficult. 

This paper proposes a novel equilibrium-independent approach to transient stability analysis of lossy distribution systems, where we achieve scalability by breaking the network into subsystems. In particular, we consider the angle droop control for the power-electronic interfaces to drive voltage phase angles to their setpoints~\cite{zhang2016transient, huang2021neural}. For lossy transmission lines, we design a tunable parameter that can serve to explicitly trade off between the control effort and the stability region. At the limit, we recover results for lossless transmission lines, allowing the proposed method to gracefully extrapolates between lossless and lossy systems.

 Motivated by  equilibrium-independent passivity (EIP)  proposed in~\cite{hines2011equilibrium, arcak2016networks}, we study the network stability with time-varying equilibrium points by certifying EIP of each subsystems. 
 Then, stability certification is reduced to checking the diagonal stability property of the interconnection matrix over subsystems subject to EIP conditions. The proposed design of the subsystems divides the interconnection matrix into the summation of a skew-symmetric and a sparse matrix. The stabilizing damping coefficients are then explicitly represented as a convex constraint.
 This in turn provides a simple yet effective approach to optimize control efforts with guaranteed stability regions. 
Case studies verify that the proposed method is much less conservative and much more scalable to large systems compared with existing methods~\cite{zhang2016transient, huang2021neural}.

\section{Model and Problem Formulation} \label{sec:model}
\subsection{Power-Electronic Interfaced Distribution Systems}


  Consider a distribution system with $n$ buses and $m$ lines modelled as a connected graph $\left(\mathcal{N},\mathcal{L} \right)$, where each bus is equipped with
  with a power-electronic interface~\cite{zhang2016transient, huang2021neural}
Buses are indexed by $k \in \mathcal{N} :=\{1,\dots, n\} $. Lines are indexed by  $l\in\mathcal{L}:=\left\{n+1,\cdots,n+m\right\}$. 
Without loss of generality, we define the power flow from $i$ to $j$ to be the positive direction if $i<j$. 
We denote the interconnections between buses $i, j$ and line $l$ connecting them as $l \in \mathcal{B}_{i}^{+}$ and  $l\in \mathcal{B}_{j}^{-}$, where $\mathcal{B}_{i}^{+}$
and $\mathcal{B}_{j}^{-}$
  represents the line $l$ leaving bus $i$ and entering bus $j$, respectively.

We adopt the model proposed in~\cite{zhang2016transient} where angle and voltage droop control are utilized for
real and reactive power sharing through power-electronic interfaces. Let $\delta_{k}$ and $v_{k}$ be the voltage phase angle and voltage magnitude at bus  $k\in\mathcal{N}$, and $\delta_{k}^{*},v_{k}^{*}$ be their setpoint values set by distribution system operators~(for more information on how the setpoints are chosen, see~\cite{zhang2016transient, huang2021neural}).  Let $p_{k}$ and $q_{k}$ denote real and reactive power injections at bus $k$, and $p_{k}^{*}$ and $q_{k}^{*}$ be their setpoints.  The dynamics of bus $k$ are described by
\begin{subequations}\label{eq:dynamics}
\begin{align}
\tau_{\mathrm{a} k} \dot{\delta}_{k}&=
-d_{ak}(\delta_{k}-\delta_{k}^\ast)+ \left(p_{k}^{*}-p_{k}\right)\label{subeq:dynamics_delta} \\
\tau_{\mathrm{v} k} \dot{v}_{k}&=-d_{vk}(v_{k}-v_{k}^\ast)+ \left(q_{k}^{*}-q_{k}\right)\label{subeq:dynamics)v},
\end{align}
\end{subequations}
where $\tau_{\mathrm{a} k}$ and $\tau_{\mathrm{v} k}$ are time constants for voltage phase angle and voltage magnitude at bus $k$, respectively. The parameters $d_{ak}$ and $d_{vk}$ are damping coefficients controlling power injected by inverters, and thus larger values correspond to larger control efforts. Importantly, the equilibria of the system come from the setpoints $\delta^*$ and $v_k^*$, which are time varying and not known ahead of time.

We follow the model in~\cite{zhang2016transient, huang2021neural} where $\tau_{vk}\gg \tau_{ak}$ by design. Then, the voltage $v_k$ evolves much slower than the phase angle $\delta_k$, hence the angle and voltage dynamics separates in timescale and $v_k$ is typically assumes to be constant. We therefore focus on the angle stability dynamics in~\eqref{subeq:dynamics_delta}  and set $v_k=1$ per unit in the rest of this paper.

Let $g_l$ and $b_l$ be the conductance and susceptance of the transmission line $l\in\mathcal{L}$, respectively. The active power flow in the line $l$ from bus $i$ to $j$  is
\begin{equation}\label{eq:dynamics_line}
p_l=g_l-g_l\cos(\delta_i-\delta_j)+b_l\sin(\delta_i-\delta_j),  
\end{equation}
which is the nonlinear AC power flow equations. We often use $\delta_{ij}$ as a shorthand for $\delta_i-\delta_j$. System operators calculate the setpoints such that $p_k^*$ and $\delta_k^*$ satisfy the power flow equation for all $k\in\mathcal{N}$.
A transmission line is called lossless if $g_l=0$ and lossy otherwise. For distribution systems, $g_l$ is typically not significantly smaller than $b_l$. 



The buses are interconnected with transmission lines and the active power injected from bus $k$ to the network is
\begin{equation}\label{eq:power_injection}
p_k = \sum_{l\in  \mathcal{B}_k^+} p_l-\sum_{l\in  \mathcal{B}_k^-}p_l \quad .
\end{equation}
The dynamics of the system is described by~\eqref{subeq:dynamics_delta},~\eqref{eq:dynamics_line} and~\eqref{eq:power_injection}. 
The transient stability of the system is defined as the ability to converge to the equilibrium points $\delta^*$ from different initial conditions. Since equilibria are set by system operators, the system needs to be stable for multiple possible equilibria. In this paper, we adopt a modular approach to certify stability and design the damping coefficients $d_{ak}$'s, and show how it overcomes the challenges of existing approaches. 

\subsection{Stability Analysis Through A Modular Approach }

The goal of this paper is to answer two key questions for the transient stability of distribution systems: \textit{1) How large is the stability region?} and \textit{2) What is the control effort needed to attain certain range of stability region?} To this end, we certify network-level stability by breaking the network into subsystems. Then by looking at the equilibrium-independent passivity (EIP) of each subsystems and their interconnections, the stability analysis scale to large networked systems with time-varying equilibrium points~\cite{arcak2016networks}. 


For each bus~\eqref{subeq:dynamics_delta} and each transmission line~\eqref{eq:dynamics_line}, we abstract them as a subsystem $G_i$ with input $\bm{u}_i$ and output $\bm{y}_i$. Fig.~\ref{fig:Interconnection} shows the diagram for the connection of subsystems. 
The coupling of the input and output of each subsystems are described by $\bm{u}=\bm{M}\bm{y}$, where the matrix $\bm{M}$ is determined by interconnections of the system. 
We show that $\bm{M}$ is the summation of a skew-symmetric matrix $\bm{M}_1$ and a sparse matrix $\bm{M}_2$. This enable us to obtain a compact and convex expression of stabilizing damping coefficients, which can easily be used for controller design. 

Our method gracefully extrapolates between lossless and lossy systems. If all the lines lossless, the sparse component of $\bm{M}_2$  is zero and only the skew-symmetric part remains. Then standard results from EIP theory can be used to directly show the stability of the system, illustrating why lossless systems are simpler than lossy ones.

\begin{figure}[ht]	
	\centering
	\includegraphics[width=2.6in
	]{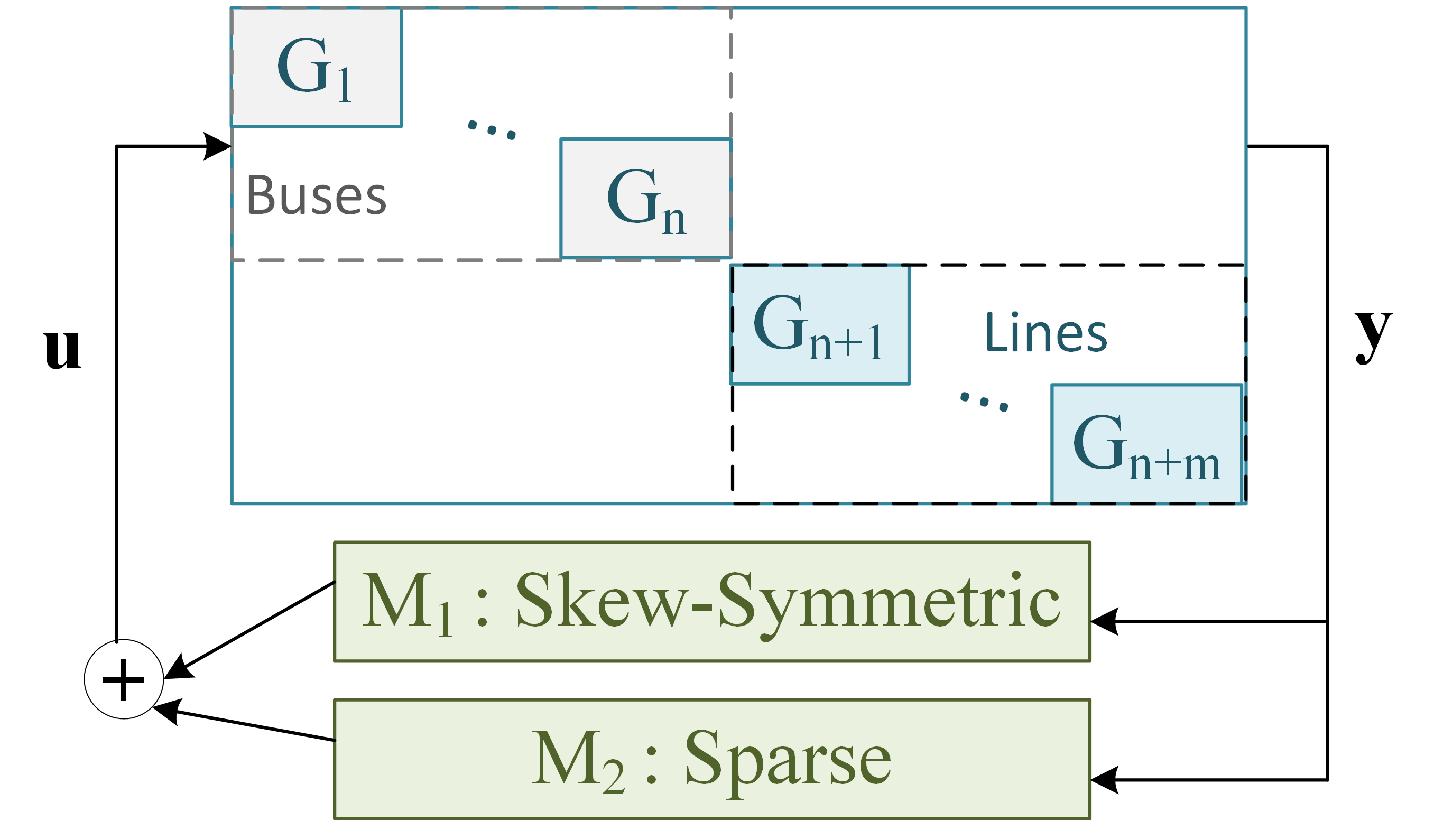}
	\caption{Interconnection of buses (grey blocks) and transmission lines (blue blocks). The input and output of each subsystems are interconnected through the $\bm{y}=(\bm{M}_1+\bm{M}_2)\bm{u}$, where $\bm{M}_1$ is skew-symmetric and $\bm{M}_2$ is sparse.}
	\label{fig:Interconnection}
\end{figure}

\section{Modular Design of Subsystems} \label{sec:module}
With the aim of network stability assessment through the passivity of subsystems, we study the abstraction of~\eqref{subeq:dynamics_delta}-\eqref{eq:power_injection} as subsystems of buses and lossy transmission lines and their input-output interconnections in this section. 

\subsection{Subsystems for Buses and Lossy Transmission Lines}


The subsystem for the lossy transmission line $l\in \mathcal{L}$ leaving bus $i$ and entering bus $j$ is defined with the input $\bm{u}_{l}=\begin{bmatrix}  \delta_i-\delta_j &  \delta_j-\delta_i \end{bmatrix}^\top
\in \mathbb{R}^2$ to be the angle differences  from $i$ to $j$ and from $j$ to $i$. The output $\bm{y}_{l}\in \mathbb{R}^2$ is defined to be the modified power flow from $i$ to $j$ and from $j$ to $i$:
\begin{subequations}\label{eq:Modular_Line}
\begin{align}
\left [\begin{matrix}
     y_{l,1}\\y_{l,2}
    \end{matrix}\right ] &=\frac{1}{2}
    \left [\begin{matrix}
    \left ( g_l-g_l\cos(u_{l,1}) \right )/\alpha_l+b_l \sin(u_{l,1})\\
     \left ( g_l-g_l\cos(u_{l,2}) \right )/\alpha_l+b_l \sin(u_{l,2})
    \end{matrix}\right ] \\
    \begin{split}
    \left [\begin{matrix}
     u_{l,1}\\u_{l,2}
    \end{matrix}\right ]
    &=
    \underbrace{\left[\begin{matrix}
     1 \\-1
    \end{matrix}\right ]}_{\Phi_{li}} \delta_i
    +
    \underbrace{\left[\begin{matrix}
     -1 \\1
    \end{matrix}\right ]}_{\Phi_{lj}} \delta_j
    \end{split}
\end{align}
\end{subequations}
where $\alpha_l>0$ is a tunable scalar and we will study later in detail. At a high level, a  larger $\alpha_l$  implies larger stability regions and larger stabilizing damping coefficients. The power flow~\eqref{eq:dynamics_line} from bus $i$ to $j$ and that from bus $j$ to $i$  can be recovered by $p_{ij}=y_{l,1}-y_{l,2}+\alpha_l(y_{l,1}+y_{l,2})$ and $p_{ji}=-y_{l,1}+y_{l,2}+\alpha_l(y_{l,1}+y_{l,2})$,
which will then serve as the input to the subsystem of  buses. 
Stacking the inputs and outputs of lines gives $\bm{u}_{\mathcal{L}}=[\bm{u}_{n+1}^\top \cdots \bm{u}_{n+m}^\top]\in\mathbb{R}^{2m}$, and $\bm{y}_{\mathcal{L}}=[\bm{y}_{n+1}^\top \cdots \bm{y}_{n+m}^\top]\in\mathbb{R}^{2m}$. The matrix block $\Phi_{li}:=\begin{bmatrix}  1&  -1 \end{bmatrix}^\top$ and  $\Phi_{lj}:=\begin{bmatrix}   -1& 1 \end{bmatrix}^\top$ are defined for the mapping from the output of the head $i$ and the tail $j$ to the input of line $l$, respectively.

The subsystem for bus $k$ is defined with the input $u_k\in\mathbb{R}$ to be the power injection from connected transmission lines and the output $y_k\in\mathbb{R}$ to be the phase angle
\begin{subequations}\label{eq:Modular_Generator}
\begin{align}
\tau_{ k} \dot{\delta}_{k}& =-d_k(\delta_{k}-\delta_{k}^\ast)+ \left(P_{k}^{*}+u_{k}\right) \\
y_{ k} &=\delta_{k}\label{subeq:Modular_Generator_y}\\
  \begin{split}
      u_{ k} &=  \sum_{l\in  \mathcal{B}_k^+}
      \underbrace{\left[\begin{matrix}
     -1 & 1
    \end{matrix}\right ]}_{\Phi_{kl}} y_l
    +\underbrace{\alpha_l\left[\begin{matrix}
     -1 & -1
    \end{matrix}\right ]}_{\Psi_{kl}} y_l\\
    &\quad
    +\sum_{l\in  \mathcal{B}_k^-}
    \underbrace{\left[\begin{matrix}
     1 & -1
    \end{matrix}\right ]}_{\Phi_{kl}} y_l
    +\underbrace{\alpha_l\left[\begin{matrix}
     -1 & -1
    \end{matrix}\right ]}_{\Psi_{kl}} y_l
    \end{split}
\end{align}
\end{subequations}
where the matrix block  $\Phi_{kl}$ and $\Psi_{kl}$ is defined for the mapping from the output of the subsystem for line $l\in\mathcal{L}$ to the input of the subsystem for bus $k\in\mathcal{N}$. The matrix block $\Phi_{kl}:=\begin{bmatrix}   -1& 1 \end{bmatrix}$ if $l \in \mathcal{B}_k^{+}$ and  $\Phi_{kl}:=\begin{bmatrix}  1&  -1 \end{bmatrix}$ if  $l\in \mathcal{B}_k^{-}$.  The matrix block $\Phi_{kl}:=\begin{bmatrix} -\alpha_k& -\alpha_k \end{bmatrix}$ is defined uniformly for all line $l$ that connects bus $k$. It will serve to constrain the minimum-effort damping coefficients that stabilize the system. More details of the modular design and how it recovers the original dynamics can be find in Appendix~\ref{app: modular}.

\subsection{The Interconnection of Subsystems}

To investigate the stability of the whole interconnected system, we stack the input/output vectors in sequence as $\bm{u}:=(\bm{u}_{\mathcal{N}}, \bm{u}_{\mathcal{L}})\in\mathbb{R}^{n+2m}$ and $\bm{y}:=(\bm{y}_{\mathcal{N}}, \bm{y}_{\mathcal{L}})\in\mathbb{R}^{n+2m}$.  
The mapping from the output of the bus $k\in\mathcal{N}$ to the input of the line $l\in\mathcal{L}$ is described by a matrix $\bm{\Phi}_{\mathcal{LN}}\in \mathbb{R}^{ 2m\times n}$, where the block in the $(2l-1)$-th, $2l$-th row and the $k$-th column  is $\Phi_{lk}$ in~\eqref{eq:Modular_Line}. Similarly,
the mapping from the output of the line $l\in\mathcal{L}$ to the input of the bus $k\in\mathcal{N}$ is described by the matrix $\bm{\Phi}_{\mathcal{NL}}\in \mathbb{R}^{n \times 2m}$, where the block in the $k$-th row and the $(2l-1)$ to $2l$-th column is $\Phi_{kl}$ in~\eqref{eq:Modular_Generator}. The input-output dependent on $\alpha$ is represented in the matrix $\bm{\Psi}\in \mathbb{R}^{n \times 2m}$, where the block in the $k$-th row and the $(2l-1)$ to $2l$-th column  is $\Psi_{kl}$ in~\eqref{eq:Modular_Generator}.
Then, the interconnection of subsystems represented in~\eqref{eq:Modular_Line} and~\eqref{eq:Modular_Generator} are compactly described by
\begin{equation}\label{eq:Matrix_M}
\bm{u}=\left (\bm{M}_1+\bm{M}_2\right )\bm{y}    
\end{equation}
where
$$
\bm{M}_1:=
    \begin{bmatrix}
\bm{0}_{n\times n}  & \bm{\Phi}_{\mathcal{NL}} \\  \bm{\Phi}_{\mathcal{LN}}   & \bm{0}_{2m\times 2m}
\end{bmatrix}, 
\bm{M}_2:=
\begin{bmatrix}
\bm{0}_{n\times n}  & \bm{\Psi } \\ \bm{0}_{2m\times n}   & \bm{0}_{2m\times 2m}
\end{bmatrix}.
$$

Note that the matrix $\bm{\Phi}_{\mathcal{NL}}$ and $\bm{\Phi}_{\mathcal{LN}}$ is constituted by the blocks that satisfy $\Phi_{il} = -\Phi_{li}^\top$ for all $i\in\mathcal{N}$ and $l\in\mathcal{L}$, we have $\bm{\Phi}_{\mathcal{NL}} +\bm{\Phi}_{\mathcal{LN}}^\top = \bm{0}$ and thus $\bm{M}_1$ is skew-symmetric. Two examples can be find in Appendix~\ref{app:example} to provide more details on how the proposed method works. The next section will show how the skew-symmetricity of $\bm{M}_1$ and the sparsity of $\bm{M}_2$ can be utilized for stability assessment of networked systems.

\section{Compositional Stability Certification} \label{section:Stability}
\subsection{Stability Region}
The stability region is the set of initial states that converges to an  equilibrium. Formally, it is defined as~\cite{chiang2011direct}:
\begin{definition}[Stability Region]\label{def: stability}
A dynamical system $\dot{\bm{\delta}}=\bm{f}_{\bm{u}}(\bm{\delta})$ is asymptotically stable around an equilibrium $\bm{\delta}^*$ if, $\forall \varphi >0$, $\exists \theta>0$ such that $\|\bm{\delta}(0) - \bm{\delta}^*\|<\theta$ implies $\|\bm{\delta}(t) - \bm{\delta}^*\|<\varphi $ and $\lim_{t\rightarrow\infty}\bm{\delta}(t)\longrightarrow\bm{\delta}^*$.  
The stability region of a stable equilibrium $\bm{\delta}^*$ is the set of all states $\bm{\delta}$ such that
$
\lim_{t\rightarrow\infty}\bm{\delta}(t)\longrightarrow\bm{\delta}^*
$.
\end{definition}

For nonlinear systems, it is very difficult to characterize the exact geometry of the whole stability region. This paper, and most others (see, e.g.~\cite{chiang2011direct, schiffer2014conditions, de2017bregman}), attempt to find an inner approximation to the true stability region through Lyapunov's direct method. Correspondingly, the stability region is algebraically caulculated by the states satisfying Lyapunov conditions
$\mathcal{S}|_{V(\cdot)}=\left\{\bm{\delta}|V(\bm{\delta})\geq0, \Dot{V}(\bm{\delta})\leq 0\right\}$ with $V(\bm{\delta})$ be a  Lyapunov function that equals zero at equilibrium. 
In the next subsections, we construct a Lyapunov function from equilibrium-independent
passivity of subsystems, which will bring larger stability region than existing methods~\cite{zhang2016transient}.

\subsection{Equilibrium Independent Passivity }

Equilibrium-independent
passivity (EIP), characterized by a dissipation inequality
 referenced to an arbitrary equilibrium input/output pair, allows one
to ascertain passivity of the components without knowledge of
the exact equilibrium~\cite{hines2011equilibrium}. The definition is given as follows~\cite{hines2011equilibrium,arcak2016networks}:
\begin{definition}[Equilibrium-Independent
Passivity]
 The system described by
$\dot{\bm{\delta}}=f(\bm{\delta}, \bm{u}),
 \bm{y}=h(\bm{\delta}, \bm{u})$ 
 is equilibrium-independent 
passive in a set $\bm{\delta}\in \mathcal{S}$ if, for every possible equilibrium $\bm{\delta}^*\in \mathcal{S}$, there exists a continuously-differentiable storage function $V_{\bm{\delta}^*}: \mathcal{S} \rightarrow \mathbb{R}_{\geq 0}$, such that $V_{\bm{\delta}^*}(\bm{\delta}^*)=0$ and
\begin{equation*}
 \nabla_{\bm{\delta}} V_{\bm{\delta}^*}(\bm{\delta})^{T} f(\bm{\delta}, \bm{u}) \leq (\bm{u}-\bm{u}^*)^\top(\bm{y}-\bm{y}^*).   
\end{equation*}
    
If there further exists a positive scalar $\epsilon$ such that
\begin{equation}\label{eq:def_sEIP}
\begin{split}
\nabla_{\bm{\delta}} V_{\bm{\delta}^*}(\bm{\delta})^{T} f(\bm{\delta}, \bm{u}) \leq &
(\bm{u}-\bm{u}^*)^\top(\bm{y}-\bm{y}^*) \\
&-\epsilon (\bm{y}-\bm{y}^*)^\top(\bm{y}-\bm{y}^*),
\end{split}
\end{equation}
then the system is strictly EIP.
\end{definition}

 In Section~\ref{section:Modular_EIP}, we will show that subsystems~\eqref{eq:Modular_Generator} corresponding to the bus $k\in\mathcal{N}$ is strictly EIP {in the region $\mathcal{S}_k$} with $\epsilon_k=d_{ak}$
 and the storage function $V_k(\bm{\delta})=\frac{1}{2\tau_k}\left(\delta_k-\delta_k^*\right)^2$ . The subsystem~\eqref{eq:Modular_Line} corresponding to the line $l\in\mathcal{L}$ is strictly EIP {in the region $\mathcal{S}_l$} with $\epsilon_l= \frac{2\alpha_l}{\sqrt{g_l^2+b_l^2\alpha_l^2}}$ and the storage function $V_l(\bm{\delta}) = 0$ . We denote 
 $\bm{\epsilon}_\mathcal{N}:=\left(\epsilon_1,\cdots, \epsilon_n\right)$ 
 , $\bm{\epsilon}_\mathcal{L}:=\left(\epsilon_{n+1}\bm{1}_2,\cdots, \epsilon_{n+m}\bm{1}_2\right)$ for the EIP coefficients of buses and lines, and the diagonal matrices $\hat{\bm{\epsilon}}_\mathcal{L}:=\operatorname{diag}\left(\bm{\epsilon}_\mathcal{L}\right)$, $\hat{\bm{\epsilon}}_\mathcal{N}:=\operatorname{diag}\left(\bm{\epsilon}_\mathcal{N}\right)$ and $\hat{\bm{\epsilon}} := \operatorname{diag}\left(\bm{\epsilon}_\mathcal{N}, \bm{\epsilon}_\mathcal{L}\right)=$ that will be used in network stability certification. In particular, let  $\bm{d}_\mathcal{N}:=\left(d_{a1},\cdots, d_{an}\right)$, we have
$
\hat{\bm{\epsilon}}_\mathcal{N}=\operatorname{diag}\left(\bm{d}_\mathcal{N}\right),
$
which links stability certification with the control efforts.

\subsection{Stability of Interconnected Systems }
In this section we derive Lyapunov functions from the storage functions.
{
We define the set $\mathcal{S}:=\left\{ \bigotimes_{i=1}^{n+m}\mathcal{S}_i \right\}$ to be the states that satisfy strictly EIP for each input-output pairs in all the subsystems. }
The next lemma allows us to construct Lyapunov functions for any equilibrium that is contained in $\mathcal{S}$. {Consequently, $\mathcal{S}$ is a subset of the states where the system will remain stable.} 

\begin{lemma}\label{lemma:Network_full_stable}
Consider the networked system~\eqref{eq:Modular_Line}-\eqref{eq:Matrix_M} with input $\bm{u}$ and output $\bm{y}$ that interconnected through $\bm{u}=\bm{M}\bm{y}$, where each input-output pair $\{u_i,y_i\}$ is locally strictly EIP with  $\epsilon_i$ for $\bm{\delta}\in\mathcal{S}$. If there exists a diagonal matrix $\bm{C} \succ 0$ such that $\bm{C}(\bm{M}-\hat{\bm{\epsilon}})+(\bm{M}-\hat{\bm{\epsilon}})^\top\bm{C}\prec 0$, then any equilibrium $\bm{\delta}^*\in\mathcal{S}$ is locally  asymptotically stable.
\end{lemma}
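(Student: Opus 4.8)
The plan is to build a single composite Lyapunov function by weighting the storage functions of the individual subsystems with the diagonal entries of $\bm{C}$, and then to show that the assumed matrix inequality forces its derivative to be negative definite. Concretely, write $c_i := (\bm{C})_{ii} > 0$ and set $V(\bm{\delta}) = \sum_{i=1}^{n+2m} c_i V_i(\bm{\delta})$, where $V_i$ is the storage function of the $i$-th scalar input-output pair; the bus pairs contribute $\frac{c_k}{2\tau_k}(\delta_k-\delta_k^*)^2$ and the (static) line pairs contribute zero. Because the bus part is a positive definite quadratic in $\bm{\delta}$ that vanishes exactly at $\bm{\delta}^*$, $V$ is a legitimate candidate: $V(\bm{\delta}^*)=0$ and $V(\bm{\delta})>0$ for $\bm{\delta}\neq\bm{\delta}^*$.

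Next I would differentiate $V$ along trajectories and invoke strict EIP pairwise. Abbreviating $\tilde{\bm{u}}=\bm{u}-\bm{u}^*$ and $\tilde{\bm{y}}=\bm{y}-\bm{y}^*$, summing the scalar dissipation inequalities \eqref{eq:def_sEIP} weighted by $c_i>0$ gives
\[
\dot V \le \tilde{\bm{u}}^\top \bm{C}\,\tilde{\bm{y}} - \tilde{\bm{y}}^\top \bm{C}\hat{\bm{\epsilon}}\,\tilde{\bm{y}},
\]
where I use that $\bm{C}$ and $\hat{\bm{\epsilon}}$ are diagonal, hence symmetric and commuting. The key simplification is that the interconnection $\bm{u}=\bm{M}\bm{y}$ is a constant algebraic relation, so it also holds at equilibrium, $\bm{u}^*=\bm{M}\bm{y}^*$, giving $\tilde{\bm{u}}=\bm{M}\tilde{\bm{y}}$. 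Substituting and symmetrizing the resulting quadratic form yields
\[
\dot V \le \tfrac12\,\tilde{\bm{y}}^\top\!\left[\bm{C}(\bm{M}-\hat{\bm{\epsilon}})+(\bm{M}-\hat{\bm{\epsilon}})^\top\bm{C}\right]\tilde{\bm{y}},
\]
exactly the matrix of the hypothesis; here the identity $\bm{C}\hat{\bm{\epsilon}}+\hat{\bm{\epsilon}}^\top\bm{C}=2\bm{C}\hat{\bm{\epsilon}}$ (valid since both are diagonal) is what aligns the algebra with the stated LMI.

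Closing the argument, the hypothesis $\bm{C}(\bm{M}-\hat{\bm{\epsilon}})+(\bm{M}-\hat{\bm{\epsilon}})^\top\bm{C}\prec0$ makes the right-hand side strictly negative whenever $\tilde{\bm{y}}\neq0$. To turn this into asymptotic stability of the genuine state $\bm{\delta}$, I would use that the bus outputs are the angles themselves, $y_k=\delta_k$ in \eqref{subeq:Modular_Generator_y}, so $\tilde{\bm{y}}=0$ forces $\bm{\delta}=\bm{\delta}^*$; consequently $\dot V \le -\gamma\|\bm{\delta}-\bm{\delta}^*\|^2<0$ for some $\gamma>0$ and all $\bm{\delta}\neq\bm{\delta}^*$ in $\mathcal{S}$. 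Since $V$ is positive definite and $\dot V$ negative definite on $\mathcal{S}$, Lyapunov's direct method (applied on a sublevel set of $V$ contained in $\mathcal{S}$) yields local asymptotic stability of the arbitrary equilibrium $\bm{\delta}^*\in\mathcal{S}$.

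I expect the main obstacles to be two bookkeeping points rather than a deep difficulty, since the result is essentially the EIP interconnection theorem specialized to this network. First, care is needed in passing from the $n+2m$ scalar pairwise EIP inequalities to the compact matrix form: this is legitimate precisely because each line subsystem \eqref{eq:Modular_Line} decouples into two independent scalar static maps, so the per-component diagonal weights $c_i$ can be applied separately and $\bm{C},\hat{\bm{\epsilon}}$ genuinely commute. Second, one must justify that negative definiteness of $\dot V$ in the output increment $\tilde{\bm{y}}$ upgrades to negative definiteness in the state $\bm{\delta}$ and that the estimate is confined to the region $\mathcal{S}$ where the EIP inequalities hold; both follow from the output-equals-angle structure and from working inside a sublevel set of $V$. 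Note that the skew-symmetry of $\bm{M}_1$ is \emph{not} needed for this lemma itself --- it is exploited only later to construct a feasible $\bm{C}$ --- so the proof can treat $\bm{M}$ as a generic constant interconnection matrix.
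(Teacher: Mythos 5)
Your proposal is correct and follows essentially the same route as the paper: weight the subsystem storage functions by the diagonal entries of $\bm{C}$, sum the strict EIP dissipation inequalities into a quadratic form in $(\tilde{\bm{u}},\tilde{\bm{y}})$, eliminate $\tilde{\bm{u}}$ via $\tilde{\bm{u}}=\bm{M}\tilde{\bm{y}}$, and conclude from $\bm{y}=\bm{y}^*\iff\bm{\delta}=\bm{\delta}^*$. Your added remarks (positive definiteness of $V$ via the bus terms, and that skew-symmetry of $\bm{M}_1$ is not needed until the choice of $\bm{C}$ in the subsequent theorem) are accurate refinements of the paper's argument, not deviations from it.
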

\begin{proof}
The proof roughly follows~\cite{arcak2016networks}. For completeness, we provide the key steps.
For the system~\eqref{eq:Modular_Line}-\eqref{eq:Matrix_M}, let the sum of the storage functions
$
V(\bm{\delta})= \sum_{i=1}^{n+2m} c_i V_{i}\left(\bm{\delta}\right) 
$
 serve as a candidate Lyapunov function. Its time derivative is
\begin{align}
    \dot{V}\left(\bm{\delta}\right) &=
 \sum_{i=1}^{n+2m}  c_i\dot{V}_{i}\left(\bm{\delta}\right)  \label{eq:Lya_sum} \\
    &\leq \sum_{i=1}^{n+2m}c_i
    \begin{bmatrix}
    u_{i}-u^*_i\\
    y_{i}-y^*_i
    \end{bmatrix}^{\top} 
     \begin{bmatrix}
    0 & 1 / 2\\
    1 / 2 & -\epsilon_{i}
    \end{bmatrix}
     \begin{bmatrix}
    u_{i}-u^*_i\\
    y_{i}-y^*_i
    \end{bmatrix} \nonumber \\
    &=\frac{1}{2} \begin{bmatrix}
 \bm{u}-\bm{u}^*\\
 \bm{y}-\bm{y}^*
\end{bmatrix}^{\top} 
    \begin{bmatrix}
     \bm{0} & \bm{C}\\
     \bm{C} &-2 \bm{C} \hat{\bm{\epsilon}}
    \end{bmatrix}
    \begin{bmatrix}
     \bm{u}-\bm{u}^*\\
     \bm{y}-\bm{y}^*
    \end{bmatrix} \nonumber\\
    &=\frac{1}{2}
    \begin{bmatrix}
 \bm{y}-\bm{y}^*
\end{bmatrix}^{\top}
    \begin{bmatrix}
    \bm{M} \\
    \bm{I}
    \end{bmatrix}^{\top}
    \begin{bmatrix}
     \bm{0} & \bm{C} \\
     \bm{C}  &-2 \bm{C} \hat{\bm{\epsilon}}
    \end{bmatrix}
    \begin{bmatrix}
    \bm{M} \\
    \bm{I}
    \end{bmatrix}
    \begin{bmatrix}
 \bm{y}-\bm{y}^*
\end{bmatrix}\nonumber\\
    &=\frac{1}{2}
    \left (\bm{y}-\bm{y}^*\right )^{\top}
    \left (\bm{C}(\bm{M}-\hat{\bm{\epsilon}})+(\bm{M}-\hat{\bm{\epsilon}})^\top\bm{C}\right )
     \left (\bm{y}-\bm{y}^*\right ) \nonumber
\end{align}

Because $\bm{y}=\bm{y}^*$ if and only if $\bm{\delta} = \bm{\delta}^*$,  $\bm{C}(\bm{M}-\hat{\bm{\epsilon}})+(\bm{M}-\hat{\bm{\epsilon}})^\top\bm{C}\prec 0$ implies $\dot{V}\left(\bm{\delta}\right)<0$ for $\bm{y}\neq \bm{y}^*$. Hence $V(\bm{\delta})$ is a valid Lyapunov function for $\bm{\delta}\in\mathcal{S}$, and an equilibrium $\bm{\delta}^*\in\mathcal{S}$ is locally asymptotically stable.
\end{proof}

The LMI in Lemma~\ref{lemma:Network_full_stable} is not jointly convex in $\bm{d}_{\mathcal{N}}$ or $\bm{C}$. The next theorem shows how the damping coefficients  $\bm{d}_\mathcal{N}$ can be designed based on the special structure of the interconnection matrix $\bm{M}$.


\begin{thm}[Local Exponential Stability]\label{thm:minimum_damp}
If the damping coefficients satisfy $ \operatorname{diag}(\bm{d}_\mathcal{N})\succ\frac{1}{4}\bm{\Psi }\hat{\bm{\epsilon}}_\mathcal{L}^{-1}\bm{\Psi }^\top$, an equilibrium $\bm{\delta}^*\in\mathcal{S}$ of the system ~\eqref{eq:dynamics}-\eqref{eq:power_injection} is locally exponentially stable.
\end{thm}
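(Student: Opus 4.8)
The plan is to verify the diagonal-stability hypothesis of Lemma~\ref{lemma:Network_full_stable} with the simplest admissible weight, $\bm{C}=\bm{I}$, which is legitimate because $\bm{I}$ is diagonal and positive definite. The modular representation~\eqref{eq:Modular_Line}--\eqref{eq:Matrix_M} recovers the original dynamics~\eqref{eq:dynamics}--\eqref{eq:power_injection}, so it suffices to certify the modular system. With $\bm{C}=\bm{I}$ the Lyapunov matrix splits cleanly: since $\bm{M}_1$ is skew-symmetric, $\bm{M}_1+\bm{M}_1^\top=\bm{0}$; and since $\bm{C}$ and $\hat{\bm{\epsilon}}$ are both diagonal they commute, so the cross term collapses to $-2\hat{\bm{\epsilon}}$. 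First I would therefore reduce $\bm{C}(\bm{M}-\hat{\bm{\epsilon}})+(\bm{M}-\hat{\bm{\epsilon}})^\top\bm{C}\prec 0$ to
\[
\bm{W}:=\bm{M}_2+\bm{M}_2^\top-2\hat{\bm{\epsilon}}\prec 0 ,
\]
after which all that remains is linear algebra on the block structure; the nonlinearity and the topology have already been absorbed into $\hat{\bm{\epsilon}}$ and $\bm{\Psi}$.

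Second, I would expose the block form. Using $\bm{M}_2=\begin{bmatrix}\bm{0}&\bm{\Psi}\\\bm{0}&\bm{0}\end{bmatrix}$ and $\hat{\bm{\epsilon}}=\operatorname{diag}(\hat{\bm{\epsilon}}_\mathcal{N},\hat{\bm{\epsilon}}_\mathcal{L})$, the reduced inequality reads
\[
\begin{bmatrix}-2\hat{\bm{\epsilon}}_\mathcal{N}&\bm{\Psi}\\\bm{\Psi}^\top&-2\hat{\bm{\epsilon}}_\mathcal{L}\end{bmatrix}\prec 0 .
\]
Since each $\epsilon_l=2\alpha_l/\sqrt{g_l^2+b_l^2\alpha_l^2}>0$, we have $\hat{\bm{\epsilon}}_\mathcal{L}\succ 0$, so the lower-right block is negative definite and the Schur complement with respect to it gives $-2\hat{\bm{\epsilon}}_\mathcal{N}+\frac{1}{2}\bm{\Psi}\hat{\bm{\epsilon}}_\mathcal{L}^{-1}\bm{\Psi}^\top\prec 0$. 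Rearranging and substituting $\hat{\bm{\epsilon}}_\mathcal{N}=\operatorname{diag}(\bm{d}_\mathcal{N})$ yields exactly $\operatorname{diag}(\bm{d}_\mathcal{N})\succ\frac{1}{4}\bm{\Psi}\hat{\bm{\epsilon}}_\mathcal{L}^{-1}\bm{\Psi}^\top$, the theorem's hypothesis. Thus the hypothesis is equivalent to the lemma's condition with $\bm{C}=\bm{I}$, and Lemma~\ref{lemma:Network_full_stable} immediately delivers local asymptotic stability of every $\bm{\delta}^*\in\mathcal{S}$.

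Third, I would upgrade asymptotic to exponential decay, the only claim not handed to us verbatim by the lemma. Because the line subsystems are static ($V_l=0$), the aggregate storage function collapses to $V(\bm{\delta})=\sum_{k\in\mathcal{N}}\frac{1}{2\tau_k}(\delta_k-\delta_k^*)^2$, a positive definite quadratic in the full state. The derivative bound of the lemma becomes $\dot V\le\frac{1}{2}(\bm{y}-\bm{y}^*)^\top\bm{W}(\bm{y}-\bm{y}^*)\le-\frac{1}{2}\lambda_{\min}(-\bm{W})\|\bm{y}-\bm{y}^*\|^2$. The key observation is that the bus output map is the identity on the state, $\bm{y}_\mathcal{N}-\bm{y}_\mathcal{N}^*=\bm{\delta}-\bm{\delta}^*$, so $\|\bm{y}-\bm{y}^*\|^2\ge\|\bm{\delta}-\bm{\delta}^*\|^2\ge 2\tau_{\min}V(\bm{\delta})$ with $\tau_{\min}=\min_k\tau_k$. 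Combining gives $\dot V\le-\lambda_{\min}(-\bm{W})\tau_{\min}\,V$, and integrating this differential inequality yields $V(\bm{\delta}(t))\le V(\bm{\delta}(0))e^{-ct}$ with $c=\lambda_{\min}(-\bm{W})\tau_{\min}>0$, hence exponential convergence of $\|\bm{\delta}(t)-\bm{\delta}^*\|$.

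The sign bookkeeping and the Schur step are routine; the part I expect to require the most care is this last upgrade. The subtlety is that $\dot V$ is naturally controlled by $\|\bm{y}-\bm{y}^*\|$, which blends the bus angles with the nonlinear line outputs, whereas $V$ only sees the bus angles---the argument closes precisely because the bus output is the identity on the state, so no lower bound on the nonlinear line outputs is needed. I would also confirm that the sublevel sets of $V$ invoked in the comparison argument stay inside $\mathcal{S}$, so that the strict-EIP derivative bounds remain valid along the entire trajectory.
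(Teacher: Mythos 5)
Your proposal is correct and follows essentially the same route as the paper: set $\bm{C}=\bm{I}$ in Lemma~\ref{lemma:Network_full_stable}, kill $\bm{M}_1$ by skew-symmetry, and reduce the resulting block LMI to $\operatorname{diag}(\bm{d}_\mathcal{N})\succ\frac{1}{4}\bm{\Psi}\hat{\bm{\epsilon}}_\mathcal{L}^{-1}\bm{\Psi}^\top$ via the Schur complement in $\hat{\bm{\epsilon}}_\mathcal{L}\succ 0$. Your extraction of the exponential rate (bounding $\|\bm{y}-\bm{y}^*\|^2\geq\|\bm{\delta}-\bm{\delta}^*\|^2\geq 2\tau_{\min}V$ against $\lambda_{\min}(-\bm{W})$) is a slight repackaging of the paper's step of folding $\sigma\operatorname{diag}(\bm{\tau})^{-1}$ into the LMI before Schur-complementing, and yields an equally valid rate; your closing remark about keeping sublevel sets of $V$ inside $\mathcal{S}$ is a detail the paper leaves implicit.
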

\begin{proof}
This theorem follows from picking $\bm{C}$ to be the identity matrix. In this case, the condition in Lemma~\ref{lemma:Network_full_stable} becomes $\left (\bm{M}^{\top}+\bm{M}-2 \hat{\bm{\epsilon}}\right )\prec 0$.  From~\eqref{eq:Matrix_M}, $\bm{M}=\bm{M_1}+\bm{M_2}$, and using the fact that $\bm{M_1}$ is skew symmetric, 
and expanding $\hat{\bm{\epsilon}} := \operatorname{diag}\left(\bm{d}_{\mathcal{N}}, \bm{\epsilon}_\mathcal{L}\right)$, we have
\begin{equation}
\dot{V}(\bm{\delta})=
\left (\bm{y}-\bm{y}^*\right )^\top
    \begin{bmatrix}
-\operatorname{diag}(\bm{d}_{\mathcal{N}})  & \frac{1}{2}\bm{\Psi } \\ \frac{1}{2}\bm{\Psi }    & -\hat{\bm{\epsilon}_\mathcal{L}}
\end{bmatrix}
\left (\bm{y}-\bm{y}^*\right ).
\end{equation}


To certify exponential stability, we need to find a scalar $\sigma>0$, such that 
$\dot{V}(\bm{\delta})<-\sigma V(\bm{\delta})$. Since the Lyapunov function $V(\bm{\delta})= \sum_{i=1}^{n+2m} c_i V_{i}\left(\bm{\delta}\right) 
$ is
\begin{equation*}
    \begin{aligned}
    V(\bm{\delta}) &=  \sum_{i=1}^n \frac{1}{2\tau_{ai}}\left (\delta_i-\delta^*_i\right )^2\\
    &= \left (\bm{y}-\bm{y}^*\right )^\top\begin{bmatrix}
\frac{1}{2}\operatorname{diag}(\bm{\tau})^{-1}  &  \bm{0}_{n\times 2m}\\ \bm{0}_{2m\times n}   & \bm{0}_{2m\times 2m}
\end{bmatrix} \left (\bm{y}-\bm{y}^*\right )
    \end{aligned},
\end{equation*}
then 
$\dot{V}(\bm{\delta})<-\sigma V(\bm{\delta})$ is equivalent to

\begin{equation}\label{eq:exp_stable}
\begin{split}
    \begin{bmatrix}
2\operatorname{diag}(\bm{d}_{\mathcal{N}})-\sigma\operatorname{diag}(\bm{\tau})^{-1}  & -\bm{\Psi } \\ -\bm{\Psi }    & 2\hat{\bm{\epsilon}_\mathcal{L}}
\end{bmatrix}\succ 0.
\end{split}
\end{equation}

By definition, $\hat{\bm{\epsilon}_\mathcal{L}}\succ 0$ and  Schur complement gives
\begin{equation*}
   \left( 2\operatorname{diag}(\bm{d}_{\mathcal{N}})-\sigma\operatorname{diag}(\bm{\tau})^{-1} \right)-\frac{1}{2}\bm{\Psi }\hat{\bm{\epsilon}_\mathcal{L}}^{-1}\bm{\Psi }^\top\succ 0.
\end{equation*}

If $ \operatorname{diag}(\bm{d}_\mathcal{N})
\succ
\frac{1}{4}\bm{\Psi }\hat{\bm{\epsilon}}_\mathcal{L}^{-1}\bm{\Psi }^\top$, then any $\sigma$ satisfying $0<\sigma<\lambda_{\min}\left(2\operatorname{diag}(\bm{d}_{\mathcal{N}})-\frac{1}{2}\bm{\Psi }\hat{\bm{\epsilon}}_\mathcal{L}^{-1}\bm{\Psi }^\top\right)\min_{i=1}^n\tau_{ai}$ guarantees~\eqref{eq:exp_stable} and therefore the equilibrium $\bm{\delta}^*$ is locally exponentially stable.
\end{proof}

Note that the damping coefficients obtained in Theorem~\ref{thm:minimum_damp} is derived by setting $\bm{C}=\bm{I}$, thus the region of stabilizing damping coefficients $ \operatorname{diag}(\bm{d}_\mathcal{N})\succeq\frac{1}{4}\bm{\Psi }\hat{\bm{\epsilon}}_\mathcal{L}^{-1}\bm{\Psi }^\top$ is a subset of that verified through $\bm{C}(\bm{M}-\hat{\bm{\epsilon}})+(\bm{M}-\hat{\bm{\epsilon}})^\top\bm{C}\prec 0$. We will show in the case study that the damping coefficients obtained by $ \operatorname{diag}(\bm{d}_\mathcal{N})\succeq\frac{1}{4}\bm{\Psi }\hat{\bm{\epsilon}}_\mathcal{L}^{-1}\bm{\Psi }^\top$ is already much less conservative compared with existing LMIs-based methods~\cite{zhang2016transient}.


\section{ Controller Design from EIP of Subsystems} \label{section:Modular_EIP}

In this section, we prove the strictly EIP of the subsystems in~\eqref{eq:Modular_Line} and~\eqref{eq:Modular_Generator}. The system stability region is built from the angles that stabilize each of the subsystems. We also show how each stability region can be tuned to tradeoff with the size of the stabilizing damping coefficients.

\subsection{Strictly EIP of Lossy Transmission Lines and Buses}
The next Lemma shows that the subsystem~\eqref{eq:Modular_Line} of each lossy transmission line $l\in \mathcal{L}$ is strictly EIP for a region $\mathcal{S}_l$. 

\begin{lemma}
[EIP of Lossy Lines]\label{lemma:EIP_line}
The lossy transmission line $l$ from bus $i$ to $j$ represented by~\eqref{eq:Modular_Line} is strictly EIP with $\epsilon_l = \frac{2\alpha_l}{\sqrt{g_l^2+b_l^2\alpha_l^2}}$ for all the possible equilibriums $\delta_{ij}^*$ in the set $\mathcal{S}_l=\{\delta_{ij}^*| -\arctan{(b_l\alpha_l/g_l)}\leq \delta_{ij}^* \leq \arctan{(b_l\alpha_l/g_l)}\}$. 
\end{lemma}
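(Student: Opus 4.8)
The plan is to exploit the fact that the line subsystem in~\eqref{eq:Modular_Line} is purely static: its storage function is $V_l \equiv 0$, so $\nabla_{\bm{\delta}} V_l = 0$ and the strict-EIP inequality~\eqref{eq:def_sEIP} collapses to the incremental sector bound $(\bm{u}_l - \bm{u}_l^*)^\top(\bm{y}_l - \bm{y}_l^*) \geq \epsilon_l\|\bm{y}_l - \bm{y}_l^*\|^2$. Writing $\theta := \delta_{ij}$ and $\theta^* := \delta_{ij}^*$ and using that the input always has the form $\bm{u}_l = [\theta,\,-\theta]^\top$, I would first evaluate both sides explicitly. Because the two output components are built from $\theta$ and $-\theta$ through the same scalar map $\phi$, the even ($g_l$-conductance) part cancels in the inner product while the odd ($b_l$-susceptance) part doubles via $\phi(\theta)-\phi(-\theta)=b_l\sin\theta$, giving the clean identity $(\bm{u}_l - \bm{u}_l^*)^\top(\bm{y}_l - \bm{y}_l^*) = b_l(\theta - \theta^*)(\sin\theta - \sin\theta^*)$. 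The squared output norm instead keeps both parts: $\|\bm{y}_l - \bm{y}_l^*\|^2 = \tfrac{1}{2}\bigl[g_l^2(\cos\theta - \cos\theta^*)^2/\alpha_l^2 + b_l^2(\sin\theta - \sin\theta^*)^2\bigr]$. Since $\epsilon_l/2 = \alpha_l/\sqrt{g_l^2 + b_l^2\alpha_l^2}$, the lemma reduces to a single scalar inequality in $(\theta,\theta^*)$.

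The second step linearizes the trigonometric differences through sum-to-product identities. Introducing $\mu := (\theta + \theta^*)/2$ and $\nu := (\theta - \theta^*)/2$, one has $\sin\theta - \sin\theta^* = 2\cos\mu\sin\nu$ and $\cos\theta - \cos\theta^* = -2\sin\mu\sin\nu$, so the right-hand side factors a common $\sin^2\nu$ and the left factors $\nu\sin\nu$. On $\mathcal{S}_l$ both angles lie in $(-\pi/2,\pi/2)$, hence $\cos\mu > 0$ and $|\nu|<\pi/2$; the elementary bound $\nu\sin\nu \geq \sin^2\nu$ then removes the $\nu$-dependence, and dividing by $\sin^2\nu$ (the case $\nu=0$ being trivial) leaves a pure inequality in $\mu$. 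Writing $a := b_l\alpha_l$ and $g := g_l$, this is $a\sqrt{g^2+a^2}\,\cos\mu \geq g^2\sin^2\mu + a^2\cos^2\mu$ for $|\mu| \leq \arctan(a/g)$.

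The final step treats this as a quadratic inequality in $x := \cos\mu$, namely $F(x) := (g^2-a^2)x^2 + a\sqrt{g^2+a^2}\,x - g^2 \geq 0$ on $x \in [\,g/\sqrt{g^2+a^2},\,1\,]$. I would check the endpoints directly: at $x=1$ one gets $a(\sqrt{g^2+a^2}-a)>0$, and at the left endpoint one gets $ag(g-a)^2/(g^2+a^2)\geq 0$. No interior dip is then possible, by a short case split on the leading coefficient: if $g<a$ the parabola is concave, so positivity at the endpoints suffices; if $g>a$ the vertex lies at a negative abscissa, so $F$ is increasing on the whole interval and its minimum is the left endpoint; and $g=a$ is the linear case. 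This establishes $F\geq 0$ throughout and closes the argument.

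I expect the main obstacle to be the middle reduction rather than the endpoint bookkeeping: one must recognize that the region $\mathcal{S}_l$ is precisely what guarantees $\cos\mu \geq 0$ (so the bound $\nu\sin\nu \geq \sin^2\nu$ applies with the correct sign) and, simultaneously, is precisely the interval on which $F$ stays nonnegative. Making the constant $\epsilon_l = 2\alpha_l/\sqrt{g_l^2 + b_l^2\alpha_l^2}$ come out correctly—with the $\mu$-inequality becoming tight at the boundary exactly when $g_l = b_l\alpha_l$, as the left-endpoint value $ag(g-a)^2/(g^2+a^2)$ reveals—is the delicate part, and it is what ties the tunable parameter $\alpha_l$ to the size of the certified region and, through Theorem~\ref{thm:minimum_damp}, to the required damping effort.
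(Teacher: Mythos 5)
Your proof is correct, but it takes a genuinely different route from the paper's. The paper reduces the vector EIP inequality to a scalar slope condition: it rewrites each output component as the phase-shifted sinusoid $y_l(u)=\frac{g_l}{2\alpha_l}+\frac{\sqrt{g_l^2+b_l^2\alpha_l^2}}{2\alpha_l}\sin(u-\gamma_l)$ with $\gamma_l=\arctan\bigl(g_l/(b_l\alpha_l)\bigr)$, invokes the general fact (proved in the appendix) that a memoryless scalar map is strictly EIP with constant $\epsilon_l$ on a set if and only if $y_l'(u)\in[0,1/\epsilon_l]$ there, and obtains $\mathcal{S}_l$ by intersecting the intervals on which $y_l'\geq 0$ for the two inputs $u=\pm\delta_{ij}$. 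You instead verify the summed vector inequality directly: the cancellation of the even (conductance) part in the cross term, giving $(\bm{u}_l-\bm{u}_l^*)^\top(\bm{y}_l-\bm{y}_l^*)=b_l(\theta-\theta^*)(\sin\theta-\sin\theta^*)$, the sum-to-product reduction with the bound $\nu\sin\nu\geq\sin^2\nu$, and the quadratic analysis of $F(x)=(g^2-a^2)x^2+a\sqrt{g^2+a^2}\,x-g^2$ on $[\,g/\sqrt{g^2+a^2},1\,]$ all check out, including the endpoint values and the case split on the sign of $g^2-a^2$. Each approach buys something: the paper's slope/sign picture is shorter and feeds directly into the stability-region argument of Theorem~\ref{theorem:Stablizing_region}, whereas yours is self-contained (no appeal to the appendix lemma) and, because it only constrains the midpoint $\mu=(\theta+\theta^*)/2$, it in fact certifies the dissipation inequality on the larger set $|\theta+\theta^*|\leq 2\arctan(b_l\alpha_l/g_l)$, which is exactly the per-equilibrium region $\mathcal{S}_{l}|_{\bm{\delta}^*}$ of Theorem~\ref{theorem:Stablizing_region}. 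One remark: certifying the sum of the two componentwise products is in principle weaker than the paper's componentwise certification, but it is precisely what Definition~2 requires, so the lemma as stated is fully proved.
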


First we note that if $g_l=0$, then the subsystem~\eqref{eq:Modular_Line} is strictly EIP in $\delta_{ij}^*\in(-\frac{\pi}{2}, \frac{\pi}{2})$ for any $\alpha_l>0$ . In particular, $\bm{\Psi}$ can be made arbitrarily close to $\bm{0}$ and  $ \operatorname{diag}(\bm{d}_\mathcal{N})\succ\frac{1}{4}\bm{\Psi }\hat{\bm{\epsilon}}_\mathcal{L}^{-1}\bm{\Psi }^\top$ for any $\bm{d}_\mathcal{N} > 0$. Namely, $\delta_{ij}^*\in(-\frac{\pi}{2}, \frac{\pi}{2})$ is stable for any positive damping coefficients. This recovers the observations for lossless transmission lines~\cite{cui2021reinforcement}. 

For lossy transmission line with $g_l>0$, Lemma~\ref{lemma:EIP_line} shows that $\alpha_l$ trades off between the size of $\mathcal{S}_l$ and passivity: a larger $\alpha_l$ enlarges $\mathcal{S}_l$ but also
increases the bound
$\frac{1}{4}\bm{\Psi }\hat{\bm{\epsilon}}_\mathcal{L}^{-1}\bm{\Psi }^\top$ that requires larger damping. 
The proof is given below.

\begin{proof}
The subsystem~\eqref{eq:Modular_Line} is a memoryless, where $y_{l,1}$ and $y_{l,2}$ is a function of the input $u_{l,1}=\delta_{ij}$ and $u_{l,2}=-\delta_{ij}$,   respectively. Hence, it is suffices to consider the function
\begin{equation}\label{eq:abstract_y_u}
\begin{aligned}
y_l(u) =& \frac{g_l-g_l\cos(u) }{2\alpha_l}+\frac{b_l}{2}\sin(u)\\
=&\frac{g_l}{2\alpha_l}+\frac{\sqrt{g_l^2+b_l^2\alpha_l^2}}{2\alpha_l}\sin(u-\gamma_{l}), 
\end{aligned}
\end{equation}
when $u=\delta_{ij}$ and $u=-\delta_{ij}$, respectively. The constant $\gamma_l=\arctan(\frac{g_l}{b_l\alpha_l})\in(0,\pi/2)$ horizontally shift the function $y_l(u)$ as shown in Fig.~\ref{fig:sin_EIP} and thus affect the range of $\delta_{ij}$ satisfying strictly EIP.
For the memoryless system~\eqref{eq:abstract_y_u}, we take the storage function to be zero
and then the condition for strict passivity is~\cite{arcak2016networks}
\begin{equation}\label{eq:memoryless_EIP}
\left(u-u^*\right)\left(y_l(u)-y_l(u^*)\right)-\epsilon_l
\left(y_l(u)-y_l(u^*)\right)^2\geq 0,
\end{equation}
which holds for any equilibrium  if and only if  $y_l^\prime(u)\in \left [0, \frac{1}{\epsilon_l}  \right ]$ (detailed proof is given in Appendix~\ref{app:memoryless}. To this end, setting $\epsilon_l = \frac{2\alpha}{\sqrt{g^2+b^2\alpha^2}}$ guarantees that $y_l'(u)\leq\frac{1}{\epsilon_l}$. Then  $y_l'(u)\geq 0$ is guaranteed for the region $u\in [-\frac{\pi}{2}+\gamma_l, \frac{\pi}{2}+\gamma_l]$, which is labeled in red  in Fig.~\ref{fig:sin_EIP}.

Substituting $u=\delta_{ij}$ and $u=-\delta_{ij}$ gives $-\frac{\pi}{2}\leq \delta_{ij}-\gamma_l\leq \frac{\pi}{2}$ and  $-\frac{\pi}{2}\leq -\delta_{ij}-\gamma_l\leq \frac{\pi}{2}$, respectively. Taking the intersection, the angle difference satisfying strictly EIP is $\delta_{ij}\in [-\frac{\pi}{2}+\gamma_l, \frac{\pi}{2}-\gamma_l ]$, which is equivalent to $\delta_{ij}\in [-\arctan{(b_l\alpha_l/g_l)}, \arctan{(b_l\alpha_l/g_l)}].$
\end{proof}

\begin{figure}[ht]	
	\centering
	\includegraphics[width=3.5in]{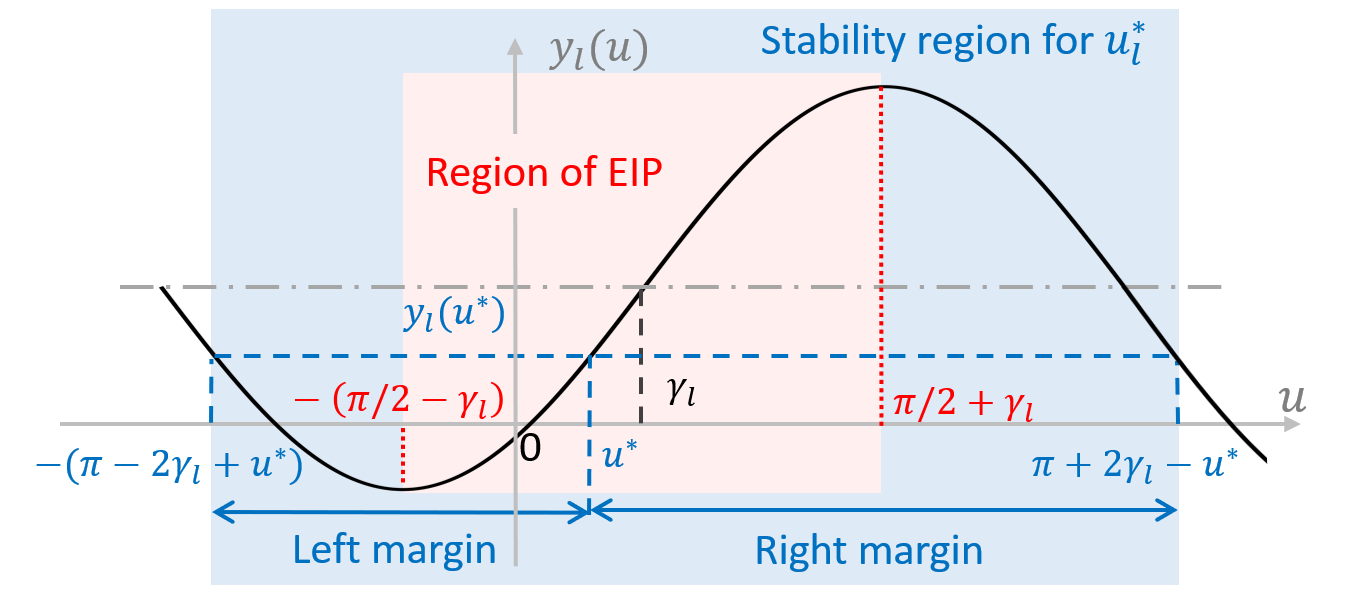}
	\caption{ The region of EIP $\mathcal{S}_l$ is computed by $y_l'(u)\geq 0$ and is labeled in red. The stability region for an equilibrium $u^*$ is the areas that  $y_l(u)-y_l(u^*)$ and $u-u^*$ has the same sign. The stability region is labeled in blue, and its intersection for all equilibrium $u^*\in\mathcal{S}_l$ is the region of EIP in red. \vspace{-0.5cm}}
	\label{fig:sin_EIP}
\end{figure}

\begin{lemma}
[EIP of buses]\label{theorem:EID_bus}
Bus $k$ represented by~\eqref{eq:Modular_Generator} is strictly EIP with $\epsilon_k =d_{ak}$ for all equilibria  $\delta^*_k\in \mathbb{R}$.
\end{lemma}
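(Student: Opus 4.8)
The plan is to verify the strict EIP inequality~\eqref{eq:def_sEIP} directly, exploiting that the bus subsystem~\eqref{eq:Modular_Generator} is a \emph{linear} (affine) first-order system whose output $y_k=\delta_k$ is simply the state. Because the dynamics are affine in the error $\delta_k-\delta_k^*$, I expect the dissipation inequality to hold with \emph{equality} and to be independent of where the equilibrium sits. This is exactly why EIP holds for every $\delta_k^*\in\mathbb{R}$ with no restriction to a region, in sharp contrast to the lossy-line case of Lemma~\ref{lemma:EIP_line}, where the sinusoidal nonlinearity forces the restriction to $\mathcal{S}_l$.

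First I would pin down the equilibrium input-output pair. Setting $\dot\delta_k=0$ at $\delta_k=\delta_k^*$ in~\eqref{eq:Modular_Generator} gives $0=-d_{ak}(\delta_k^*-\delta_k^*)+P_k^*+u_k^*$, so the equilibrium input is $u_k^*=-P_k^*$ and the equilibrium output is $y_k^*=\delta_k^*$. Hence for any state $y_k-y_k^*=\delta_k-\delta_k^*$ and $P_k^*+u_k=u_k-u_k^*$. Next I would take the quadratic storage function $V_k(\bm{\delta})=\tfrac{\tau_k}{2}(\delta_k-\delta_k^*)^2$, which is nonnegative and vanishes at $\delta_k=\delta_k^*$; here the scaling by $\tau_k$ is chosen precisely so that the time constant cancels and the passivity index comes out as the damping itself. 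Differentiating along trajectories and substituting $\tau_k\dot\delta_k=-d_{ak}(\delta_k-\delta_k^*)+P_k^*+u_k$ (with $d_{ak}$ the coefficient written $d_k$ in~\eqref{eq:Modular_Generator}),
\begin{align*}
\dot V_k &= \tau_k(\delta_k-\delta_k^*)\dot\delta_k = (\delta_k-\delta_k^*)\big[-d_{ak}(\delta_k-\delta_k^*)+P_k^*+u_k\big] \\
&= (u_k-u_k^*)(y_k-y_k^*) - d_{ak}(y_k-y_k^*)^2 .
\end{align*}
This is exactly the strict EIP condition with $\epsilon_k=d_{ak}$, holding with equality for every $\delta_k^*\in\mathbb{R}$.

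There is essentially no hard step here: the subsystem is linear, and the argument is a one-line completion of the supply-rate identity. The only points that require care are (i) reading off the equilibrium input $u_k^*=-P_k^*$ from the steady-state condition; (ii) choosing the storage-function scaling so that $\tau_k$ cancels and $\epsilon_k$ equals the damping $d_{ak}$ rather than a $\tau_k$-dependent quantity; and (iii) observing that, because the identity above is independent of $\delta_k^*$, no restriction on the equilibrium (no analogue of $\mathcal{S}_l$) is needed, giving EIP for all $\delta_k^*\in\mathbb{R}$. This clean correspondence $\epsilon_k=d_{ak}$ is precisely what lets the damping coefficients later enter the stability certificate through $\hat{\bm{\epsilon}}_\mathcal{N}=\operatorname{diag}(\bm{d}_\mathcal{N})$ in Lemma~\ref{lemma:Network_full_stable} and Theorem~\ref{thm:minimum_damp}.
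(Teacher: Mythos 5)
Your proof is correct, and it fills in exactly the computation the paper waves away with ``follows directly from the definitions'': identify $u_k^*=-P_k^*$ from the steady state, take a quadratic storage function, and observe that the affine dynamics make the dissipation identity hold with equality for every $\delta_k^*\in\mathbb{R}$, yielding $\epsilon_k=d_{ak}$. One point worth flagging: your storage function $V_k=\tfrac{\tau_k}{2}(\delta_k-\delta_k^*)^2$ differs from the $V_k=\tfrac{1}{2\tau_k}(\delta_k-\delta_k^*)^2$ the paper states in Section~\ref{section:Stability} (and reuses in the proof of Theorem~\ref{thm:minimum_damp}). Under the paper's own EIP definition, where $f(\bm{\delta},\bm{u})=\dot{\delta}_k=\tau_k^{-1}\bigl[-d_{ak}(\delta_k-\delta_k^*)+P_k^*+u_k\bigr]$, the gradient must carry a factor of $\tau_k$ (not $1/\tau_k$) for the cross term to come out as $(u_k-u_k^*)(y_k-y_k^*)$ with unit coefficient and for the passivity index to equal $d_{ak}$ rather than a $\tau_k$-dependent quantity; your normalization is the one consistent with the claimed $\epsilon_k=d_{ak}$, so this is a discrepancy in the paper rather than a gap in your argument.
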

This Lemma shows that the subsystem of buses is strictly EIP for all the possible equilibrium of angles. It follows directly from the definitions and we omit the proof. 


\subsection{Sizing Stability Regions}
The equilibrium-independent stability guarantees that any equilibrium in the set $\mathcal{S}$ is exponentially stable. Naturally, it is of interest to control the size of the stability region $\mathcal{S}$ (sometimes called region of attraction). 
The next theorem shows how the parameter $\alpha$ should be chosen if the  stability region  need to meet a prescribed size.

\begin{thm}
[Tuning $\alpha$ for Stability Region]\label{theorem:Stablizing_region}
For the line $l$ from bus $i$ to $j$ with an equilibrium $\delta_{ij}^*\in\mathcal{S}_l$, the stability region is $\mathcal{S}_{l}|_{\bm{\delta}^*}=\{\delta_{ij}| -2\arctan{(b_l\alpha_l/g_l)}-\delta_{ij}^*\leq \delta_{ij} \leq 2\arctan{(b_l\alpha_l/g_l)}-\delta_{ij}^*\}$.
If $\alpha_l\geq\frac{g_l\tan(|\delta_{ij}^*|+\beta_l/2)}{b_l}$ for a constant $0<\beta_l<\pi-2|\delta_{ij}^*|$, then the system is guaranteed to be stable around the equilibrium $\delta_{ij}^*$ with at least the margin of $\beta_l$, i.e., $[\delta_{ij}^*-\beta_l, \delta_{ij}^*+\beta_l]\subset \mathcal{S}_{l}|_{\bm{\delta}^*}$. 
\end{thm}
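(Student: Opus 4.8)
The plan is to obtain the stability region from the sign condition depicted in Fig.~\ref{fig:sin_EIP}, and then read off the tuning condition for $\alpha_l$ by a direct interval-containment argument. First I would recall from the proof of Lemma~\ref{lemma:EIP_line} that the scalar output map is the shifted sinusoid $y_l(u) = \frac{g_l}{2\alpha_l} + \frac{\sqrt{g_l^2 + b_l^2\alpha_l^2}}{2\alpha_l}\sin(u - \gamma_l)$ with $\gamma_l = \arctan(g_l/(b_l\alpha_l))$, together with the identity $\arctan(b_l\alpha_l/g_l) = \pi/2 - \gamma_l =: A_l$. As in the figure, the stability region for a fixed equilibrium $u^*$ is the set where $y_l(u)-y_l(u^*)$ and $u-u^*$ share the same sign, i.e. $(u - u^*)(y_l(u) - y_l(u^*)) \geq 0$; since $y_l$ is monotone increasing only on the single arch $u-\gamma_l\in[-\pi/2,\pi/2]$, this set is the maximal interval around $u^*$ bounded by the two nearest points at which $y_l$ returns to the level $y_l(u^*)$.

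The key step is to locate those two level-crossings via the symmetry of the sinusoid about its peak $u_p=\pi/2+\gamma_l$ and its trough $u_t=-\pi/2+\gamma_l$. Reflecting $u^*$ about the trough gives the lower crossing $2u_t-u^*=-2A_l-u^*$, and reflecting about the peak gives the upper crossing $2u_p-u^*=2\pi-2A_l-u^*$. Because the line subsystem~\eqref{eq:Modular_Line} carries the two scalar channels $u_{l,1}=\delta_{ij}$ and $u_{l,2}=-\delta_{ij}$ (with equilibria $\delta_{ij}^*$ and $-\delta_{ij}^*$), EIP requires both sign conditions to hold simultaneously. Applying the crossing formula to the first channel yields $\delta_{ij}\in[-2A_l-\delta_{ij}^*,\,2\pi-2A_l-\delta_{ij}^*]$; applying it to the second channel and reflecting through $\delta_{ij}\mapsto-\delta_{ij}$ yields $\delta_{ij}\in[-2\pi+2A_l-\delta_{ij}^*,\,2A_l-\delta_{ij}^*]$. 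Intersecting these two intervals, and using $A_l<\pi/2$ so that $2A_l<2\pi-2A_l$, the wide sides are cut off and I obtain exactly $\mathcal{S}_l|_{\bm{\delta}^*}=[-2A_l-\delta_{ij}^*,\,2A_l-\delta_{ij}^*]$, which is the claimed region.

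For the tuning statement I would impose the containment $[\delta_{ij}^*-\beta_l,\,\delta_{ij}^*+\beta_l]\subseteq[-2A_l-\delta_{ij}^*,\,2A_l-\delta_{ij}^*]$. Comparing the two endpoints gives $2A_l\geq\beta_l+2\delta_{ij}^*$ and $2A_l\geq\beta_l-2\delta_{ij}^*$, which combine into the single requirement $A_l\geq|\delta_{ij}^*|+\beta_l/2$. Since $\beta_l<\pi-2|\delta_{ij}^*|$ guarantees $|\delta_{ij}^*|+\beta_l/2\in(0,\pi/2)$, I can apply the increasing map $\tan$ on $(-\pi/2,\pi/2)$ to $A_l=\arctan(b_l\alpha_l/g_l)$ and solve for $\alpha_l$, obtaining $\alpha_l\geq\frac{g_l\tan(|\delta_{ij}^*|+\beta_l/2)}{b_l}$, as stated.

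I expect the main obstacle to be the first half, specifically getting the intersection of the two channel conditions right: each individual sign condition produces a deceptively wide interval extending almost to $2\pi-2A_l$ on one side, and the tight, nearly symmetric region emerges only after intersecting the conditions for $u_{l,1}$ and $u_{l,2}$. Careful bookkeeping of which reflection (peak versus trough) bounds which side, and verifying that the intersection is a single interval containing $\delta_{ij}^*$, is the delicate part; once the region is pinned down, the tuning condition follows by elementary monotonicity.
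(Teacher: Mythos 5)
Your proposal is correct and follows essentially the same route as the paper's proof: both reduce the per-equilibrium condition~\eqref{eq:memoryless_EIP} to the sign-matching requirement on $(u-u^*)(y_l(u)-y_l(u^*))$ (using $y_l'\leq 1/\epsilon_l$), locate the bounding level-crossings of the shifted sinusoid, intersect the resulting intervals for the two channels $u=\delta_{ij}$ and $u=-\delta_{ij}$ to get $[-2\arctan(b_l\alpha_l/g_l)-\delta_{ij}^*,\,2\arctan(b_l\alpha_l/g_l)-\delta_{ij}^*]$, and then read off the $\alpha_l$ bound from interval containment. The only cosmetic difference is that you work with $A_l=\pi/2-\gamma_l$ and peak/trough reflections where the paper writes the same endpoints directly in terms of $\gamma_l$.
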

Note that if varying $\delta_{ij}^*$ in the set $\mathcal{S}_l=\{\delta_{ij}^*| -\arctan{(b_l\alpha_l/g_l)}\leq \delta_{ij}^* \leq \arctan{(b_l\alpha_l/g_l)}\}$, the intersection of $\mathcal{S}_{l}|_{\bm{\delta}^*}$ is exactly  $\mathcal{S}_l$. Hence, the region of equilibrium-independent stability can also be understand as the intersection of the stability region for all the possible equilibrium.


\begin{proof}
The stability certification~\eqref{eq:Lya_sum}-\eqref{eq:exp_stable} holds as long as the inequality~\eqref{eq:def_sEIP} holds. For a certain equilibrium $\bm{\delta}^*$, we define the  stability region $\mathcal{S}_{l}|_{\bm{\delta}^*}$ to be the angles satisfying the inequality~\eqref{eq:def_sEIP}. This condition is equivalent to certifying~\eqref{eq:memoryless_EIP} for  $u=\delta_{ij}$ and $u=-\delta_{ij}$ when fixing $u^*=\delta_{ij}^*$.  Note that $\epsilon_l = \frac{2\alpha}{\sqrt{g^2+b^2\alpha^2}}$ gives $y_l'(u)\leq\frac{1}{\epsilon_l}$, then condition~\eqref{eq:memoryless_EIP} is satisfied as long as $y_l(u)-y_l(u^*)$ is the same sign as $u-u^*$ for both $u=\delta_{ij}$ and $u=-\delta_{ij}$.




The signs of $y_l(u)-y_l(u^*)$ and $u-u^*$ are the same when $u\in [-\pi+2\gamma_l-u^*,  \pi+2\gamma_l-u^*]
$. This region is labeled in blue in Fig.~\ref{fig:sin_EIP}, which is larger than the region of EIP shown in red.
For $u=\delta_{ij}$ and $u=-\delta_{ij}$, we have
$
\delta_{ij}\in [-\pi+2\gamma_l-\delta^*_{ij},  \pi+2\gamma_l-\delta^*_{ij}],
$
and
$
-\delta_{ij}\in [-\pi+2\gamma_l+\delta^*_{ij},  \pi+2\gamma_l+\delta^*_{ij}]
$, respectively. 
The intersection gives the  region
\begin{equation}\label{eq:stable_angle_diff}
\mathcal{S}_{l}|_{\bm{\delta}^*}=\{\delta_{ij}| -\pi+2\gamma_l-\delta^*_{ij}\leq \delta_{ij} \leq \pi-2\gamma_l-\delta^*_{ij}\}.
\end{equation}
and thus $[\delta_{ij}^*-\beta_l, \delta_{ij}^*+\beta_l]\subset \mathcal{S}_{l}|_{\bm{\delta}^*}$ yields
$$
 -\pi+2\gamma_l-\delta^*_{ij}
 \leq 
 \delta^*_{ij}-\beta_l
 \leq 
 \delta^*_{ij}+\beta_l
 \leq 
 \pi-2\gamma_l-\delta^*_{ij},
$$
which gives $\frac{\pi}{2}-\gamma_l\geq|\delta^*_{ij}|+\frac{\beta_l}{2}$. Equivalently, $\arctan(\frac{b_l\alpha}{g_l})\geq|\delta^*_{ij}|+\frac{\beta_l}{2}$ and thus we require $\alpha_l\geq\frac{g_l\tan(|\delta^*_{ij}|+\beta_l/2)}{b_l}$.
\end{proof}

Theorems~\ref{thm:minimum_damp} and~\ref{theorem:Stablizing_region} provide a way of optimizing over the damping coefficients while guaranteeing the size of the stability region. Specifically, suppose the margin of stable angle difference is $\beta_l\in[0,\pi-2|\delta^*_{ij}|]$ for $l\in  \mathcal{L}$, then we define $\alpha_l=\frac{g_l\tan(|\delta^*_{ij}|+\beta_l/2)}{b_l}$. Thus, $\epsilon_l = \frac{2\alpha_l}{\sqrt{g_l^2+b_l^2\alpha_l^2}}$ and the matrix $\bm{\Psi}$ is determined by $\alpha_l$'s through \eqref{subeq:Modular_Generator_y}. 
To minimize the damping coefficients (corresponding to hardware costs~\cite{johnson2015synthesizing}), we can solve 
\begin{subequations}\label{eq:Optimization}
\begin{align}
\min_{\bm{d}_\mathcal{N}} & \quad \|\bm{d}_\mathcal{N}\|_2\label{subeq:Optimization_obj}\\
\mbox{s.t. } & \operatorname{diag}(\bm{d}_\mathcal{N})\succ\frac{1}{4}\bm{\Psi }\hat{\bm{\epsilon}}_\mathcal{L}^{-1}\bm{\Psi }^\top\label{subeq:Optimization_constraint},
\end{align}
\end{subequations}
which is a convex problem. 
The Pareto-front of the least-cost damping coefficients and the size of stability region can be computed by varying $\alpha$, quantifying 
the trade-off between control efforts and stability regions.

\subsection{Algorithm}
We illustrate the implementation of the proposed technique in the following algorithm. 

\begin{algorithm}
 \caption{Equilibrium-Independent  Stability Analysis and Control Effort Optimization
 }
 \begin{algorithmic}[1]
 \renewcommand{\algorithmicrequire}{\textbf{Require: }}
 \renewcommand{\algorithmicensure}{\textbf{Initialisation:}}  
 \REQUIRE  The graph  $\left(\mathcal{N},\mathcal{L} \right)$, the conductance $g_l$, susceptance $b_l$ and the tunable parameter $\alpha_l$ of the transmission line $l\in\mathcal{L}$\\
 \ENSURE $\bm{\Psi}\in\bm{0}_{n\times 2m}$, $\bm{\epsilon}_{\mathcal{L}}\in\bm{0}_{2m}$\\
  \FOR {$l = n+1$ to $n+m$}
  \STATE If $l=(i,j)$, $[\bm{\Psi}]_{i,2(l-n)-1}=-\alpha_l$, $[\bm{\Psi}]_{j,2(l-n)}=-\alpha_l$
  . \\
  \STATE Set $[\bm{\epsilon}_{\mathcal{L}}]_{2(l-n)-1 }= \frac{2\alpha_l}{\sqrt{g_l^2+b_l^2\alpha_l^2}}$, $[\bm{\epsilon}_{\mathcal{L}}]_{2(l-n)}= \frac{2\alpha_l}{\sqrt{g_l^2+b_l^2\alpha_l^2}}$\\
  \ENDFOR\\
 \renewcommand{\algorithmicensure}{\textbf{Stability Verification:}}  
 \ENSURE  Define $\hat{\bm{\epsilon}}_\mathcal{L} = \operatorname{diag}(\bm{\epsilon}_\mathcal{L})$. Input $\bm{d}_\mathcal{N}:=\left(d_{a1},\cdots, d_{an}\right)$\\
  \IF{$ \operatorname{diag}(\bm{d}_\mathcal{N})\succeq\frac{1}{4}\bm{\Psi }\hat{\bm{\epsilon}}_\mathcal{L}^{-1}\bm{\Psi }^\top$}
    \STATE Stable
  \ELSE
    \STATE Unstable
  \ENDIF
 \renewcommand{\algorithmicensure}{\textbf{Control Effort Optimization:}}  
 \ENSURE Define $\hat{\bm{\epsilon}}_\mathcal{L} = \operatorname{diag}(\bm{\epsilon}_\mathcal{L})$\\
 \STATE Run SDP solver for $\min_{\bm{d}_\mathcal{N}}   \|\bm{d}_\mathcal{N}\|_2 \quad \mbox{s.t. }  \operatorname{diag}(\bm{d}_\mathcal{N})\succ\frac{1}{4}\bm{\Psi }\hat{\bm{\epsilon}}_\mathcal{L}^{-1}\bm{\Psi }^\top$
 \end{algorithmic} 
 \end{algorithm}




\section{Case Study} \label{sec:simulation}
Case studies are conducted on the IEEE 123-node test feeder~\cite{kersting1991radial}. Since existing LMIs-based and neural network-based stability assessment methods all partition the network into a 5-bus system to alleviate computational issues~\cite{zhang2016transient, huang2021neural}, we first work with this 5-bus system as well to show that the proposed method can achieve larger stability regions with smaller damping coefficients. Then, we directly work with the original 123-node feeder to show that the proposed approach can scale to large systems.

\subsection{Comparison with LMIs-Based Stability Assessment }
We first compare with existing LMI-based transient stability assessment found in~\cite{zhang2016transient}. The paramter of the test system (partitioned into 5 buses) can be found in~\cite{zhang2016transient, huang2021neural}. 


Under the same damping coefficients $\bm{d}_\mathcal{N}=[0.27, 0.22, 0.27, 0.21, 0.21]$, Fig.~\ref{fig:Stablizing_Region} compares the stability region of two lines calculated by our proposed method and the benchmark LMIs-based method in~\cite{zhang2016transient}. The angle difference $\delta_{ij}$ relative to an equilibrium for the line connecting bus $i$ and $j$ are labeled as $\Delta\delta_{ij}:=\delta_{ij}-\delta^*_{ij}$. 
Our proposed approach attains much larger stability region.


From the other direction, if we fix the size of the stability regions, \eqref{eq:Optimization} can be solved to find the stabilizing damping coefficients. This is in contrast to existing methods, where damping coefficients are found through  exhaustive searches. 


\begin{figure}[ht]
\centering
\includegraphics[width=3.in]{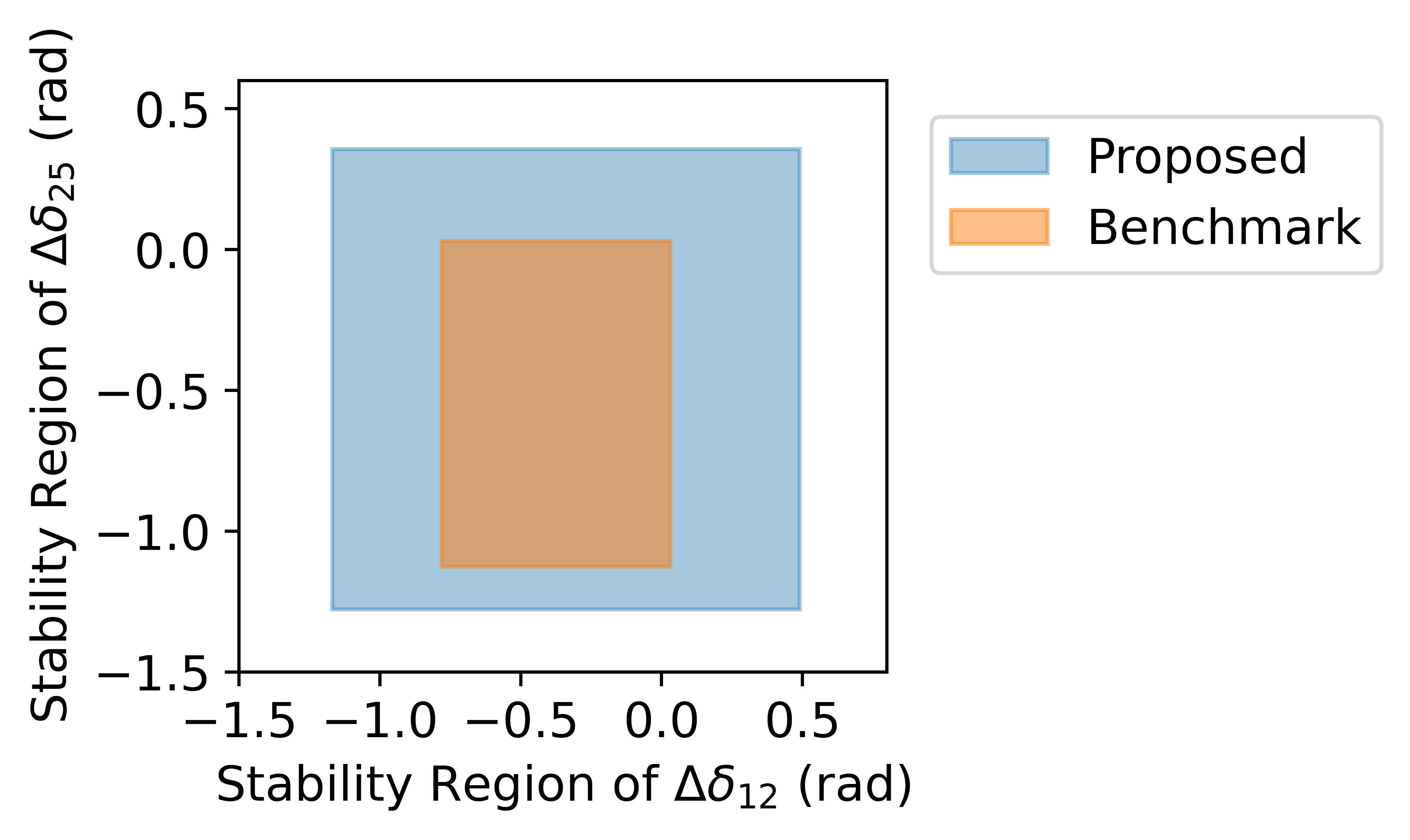}
\caption{Stability regions of $\Delta\delta_{12}$ and $\Delta\delta_{25}$ under the same damping coefficients. The proposed approach finds a larger
stability region. \vspace{-0.5cm} }
\label{fig:Stablizing_Region}
\end{figure}




\subsection{Performance on Large Systems}

To verify the performance of the proposed method on larger systems, we further simulate on the original 123-node test feeder. Fig.~\ref{fig:Dynamics} compares the dynamics of the system with different damping coefficients.
The system stabilizes to the setpoints in the former and diverging in the latter case.  
Moreover, Fig.~\ref{fig:Balance_alpha} shows the Pareto-front of the width of the stability region 
and the least-norm stabilizing damping coefficient by varying $\alpha$ from 0.1 to 2 in the line 1. This quantifies the trade-off between enlarging the stability region and minimizing control efforts. 

\begin{figure}[ht]
\centering
\subfloat[$\operatorname{diag}(\bm{d}_\mathcal{N})\succ\frac{1}{4}\bm{\Psi }\hat{\bm{\epsilon}}_\mathcal{L}^{-1}\bm{\Psi }^\top$]{\includegraphics[width=1.7in]{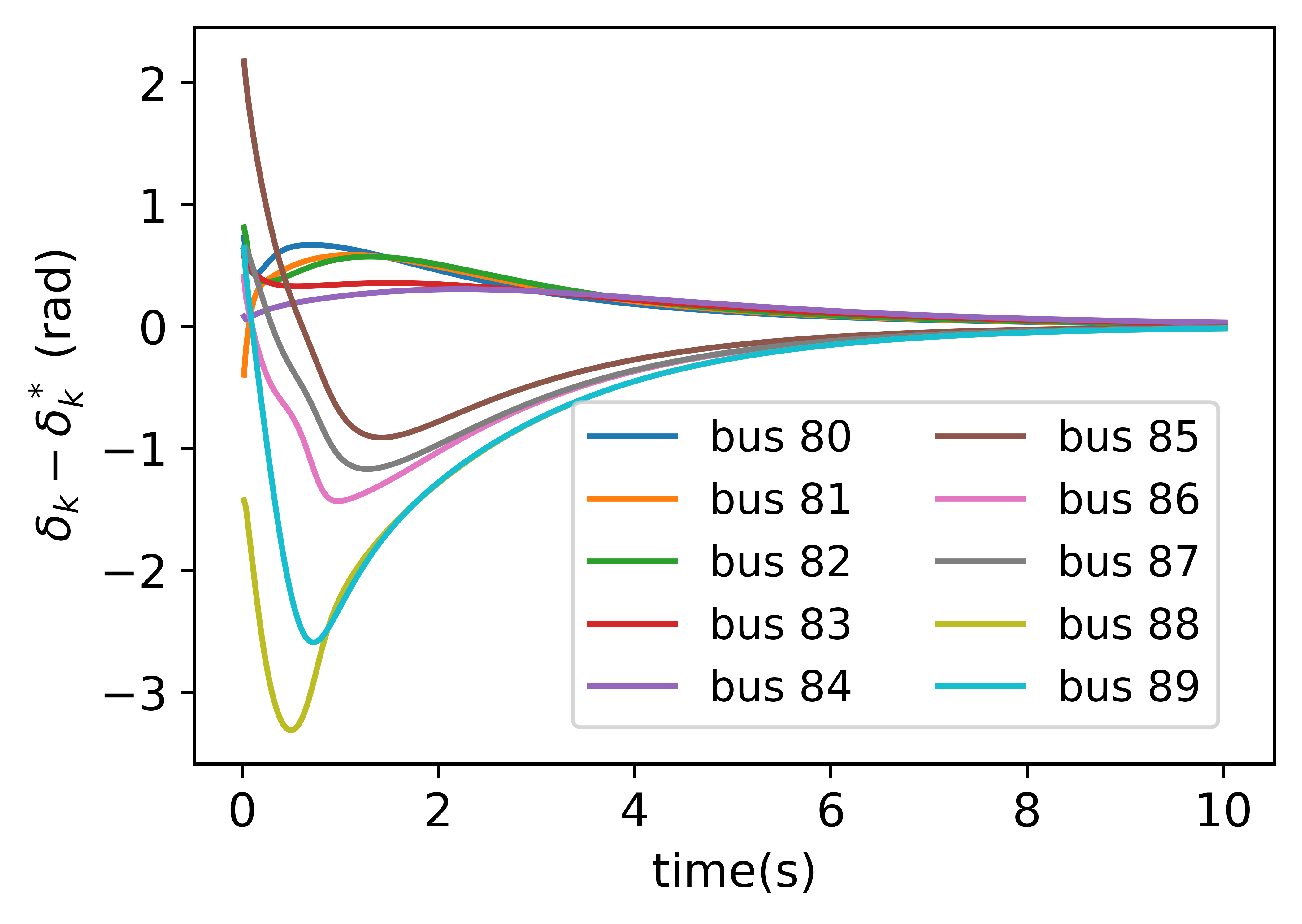}%
\label{fig_first_case}}
\subfloat[$\operatorname{diag}(\bm{d}_\mathcal{N})\prec \frac{1}{4}\bm{\Psi }\hat{\bm{\epsilon}}_\mathcal{L}^{-1}\bm{\Psi }^\top$]{\includegraphics[width=1.7in]{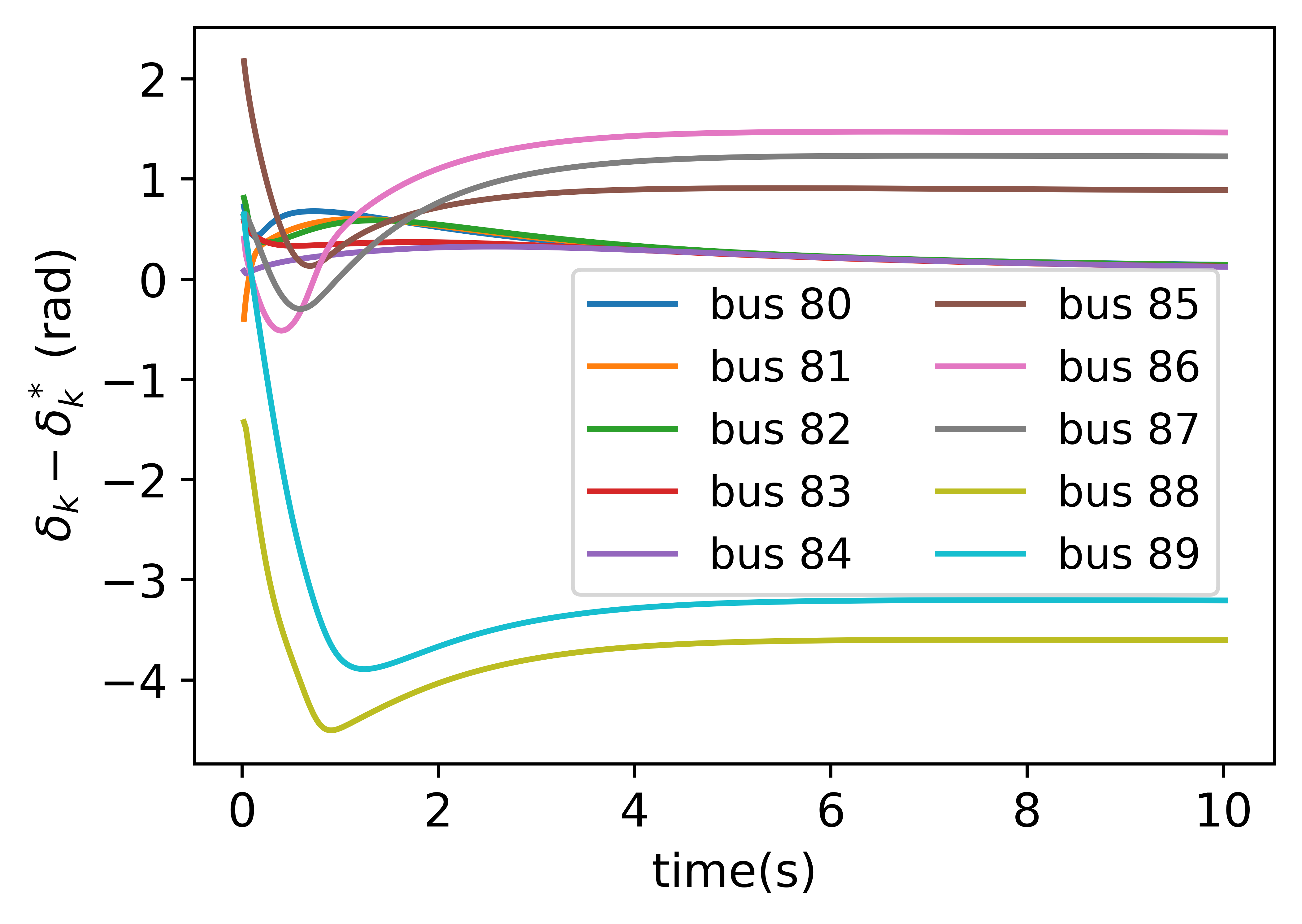}%
\label{fig_second_case}}
\caption{Dynamics of ten selected buses with (a) the damping coefficients satisfying the proposed bound in~\eqref{subeq:Optimization_constraint} (stable)  and (b) the reduced damping coefficients violate the proposed bound (diverges from setpoints). \vspace{-0.5cm} }
\label{fig:Dynamics}
\end{figure}


\begin{figure}[ht]	
	\centering
	\includegraphics[width=3.in
	]{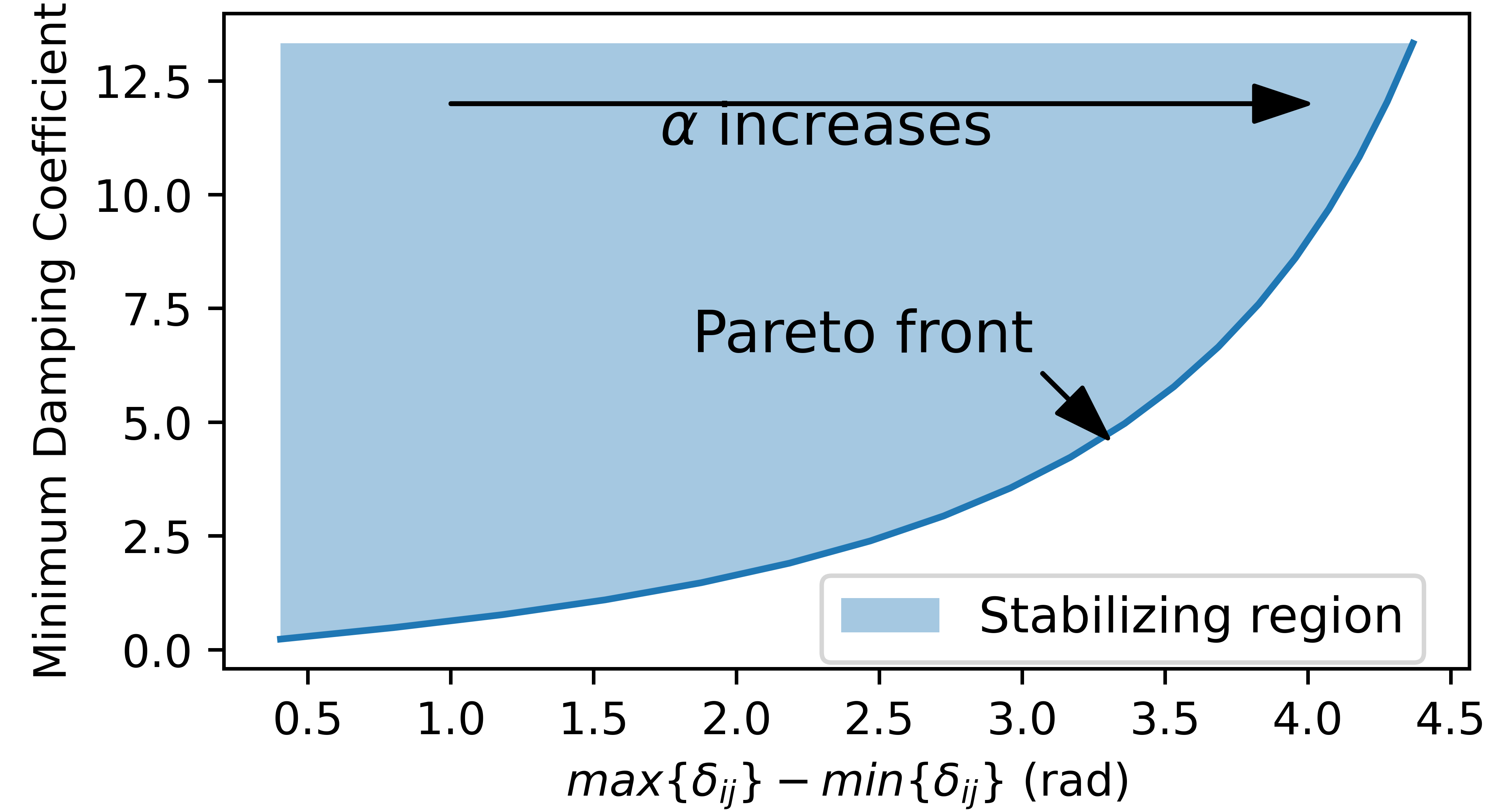}
	\caption{Pareto-front of the width of the stability region and the minimum stabilizing damping coefficients  by varying $\alpha$ from 0.1 to 2 in line 1. \vspace{-1.4cm}}
	\label{fig:Balance_alpha}
\end{figure}


\section{Conclusion} \label{sec:conclusion}
This paper proposes a modular approach for transient stability analysis of distribution systems with lossy transmission lines and time-varying equilibria. Network stability is decomposed into the strictly EIP of subsystems and the diagonal stability of the interconnection matrix. This in turn provides a simple yet effective approach to
optimize damping coefficients with guaranteed stability regions. Case studies  show that the proposed method is less conservative compared with existing approaches and can scale to large systems. The Pareto-front for the trade-off between
stability regions and control efforts can also be efficiently computed.

\bibliographystyle{IEEEtran}
\bibliography{Reference}

\section{Appendix} \label{sec:appendix}
\subsection{Detailed Descriptions on the Modular Design}\label{app: modular}
 Let $l=(i,j)\in\mathcal{L}$ denotes the line connecting bus $i$ and $j$ with the flow from $i$ to $j$ be the positive direction. 
Compactly, the closed-loop dynamics of the bus $k\in\mathcal{N}$  is 
\begin{align}\label{eq:dyn_compact}
\tau_{\mathrm{a} k} \dot{\delta}_{k} &=
-d_{ak}(\delta_{k}-\delta_{k}^\ast)+ p_{k}^{*} \\
& -\sum_{\substack{l=(k,j) \text{ or }\\ l=(j,k)\in\mathcal{L}}}\underbrace{\left(g_l-g_l\cos(\delta_k-\delta_j)+b_l\sin(\delta_k-\delta_j)\right)}_{p_{kj}}, \nonumber
\end{align}
which is the ordinary differential equation  with power flow in the summation.

We would like to clarify that we define the modules for convenience of analysis, but it does not necessarily has physical counterpart. The inputs and outputs can be designed flexibly according to the application requirements, as long as the closed-loop dynamics of all the modules recover original system dynamics in~\eqref{eq:dyn_compact}. 
For example, the power flow in the line connect bus $i$ and $j$ is 
\begin{equation}\label{eq:dynamics_line}
\begin{split}
    p_{ij}=g_l-g_l\cos(\delta_i-\delta_j)+b_l\sin(\delta_i-\delta_j)\\
    p_{ji}=g_l-g_l\cos(\delta_j-\delta_i)+b_l\sin(\delta_j-\delta_i)
\end{split}
\end{equation}

We want the analysis to extrapolate easily between lossless and lossy lines, and therefore we introduce $\alpha_l$ on the terms related to the lossy lines. Next, we will show the exact definition of the inputs and outputs and how the closed-loop dynamics of all the modulars recover original system dynamics in~\eqref{eq:dyn_compact}.
For the line $l$ connects bus $i$ and $j$, we define the input as
\begin{equation}
    \begin{split}
    u_{l,1}&=\delta_i-\delta_j\\ 
    u_{l,2}&=\delta_j-\delta_i,
    \end{split}
\end{equation}
and the output as 
\begin{equation}
\begin{split}
    y_{l,1}&=\frac{1}{2\alpha_l}\left (g_l-g_l\cos(\delta_i-\delta_j) \right )+\frac{b_l}{2} \sin(\delta_i-\delta_j)\\
    y_{l,2}&=\frac{1}{2\alpha_l}\left (g_l-g_l\cos(\delta_j-\delta_i) \right )+\frac{b_l}{2} \sin(\delta_j-\delta_i)
\end{split}
\end{equation}
Using the fact that $\sin(\delta_j-\delta_i)=-\sin(\delta_i-\delta_j)$ and $\cos(\delta_j-\delta_i)=\cos(\delta_i-\delta_j)$, we have $y_{l,1}+y_{l,2}=(g_l-g_l\cos(\delta_i-\delta_j)/\alpha_l $ and $y_{l,1}-y_{l,2}=b_l \sin(\delta_i-\delta_j) $. Hence, the power flow~\eqref{eq:dynamics_line} can be recovered by
\begin{equation}\label{eq:power_flow_pij}
    \begin{split}
    p_{ij} &= \alpha_l(y_{l,1}+y_{l,2})+(y_{l,1}-y_{l,2})\\
    p_{ji} &= \alpha_l(y_{l,1}+y_{l,2})-(y_{l,1}-y_{l,2})
    \end{split}
\end{equation}
In a vector format, we have
\begin{equation}\label{eq:vector_power_flow}
    \begin{split}
    p_{ij} &=  \left[\begin{matrix}
     1 & -1
    \end{matrix}\right ] \bm{y}_l
    +\alpha_l\left[\begin{matrix}
     1 & 1
    \end{matrix}\right ] \bm{y}_l\\
    p_{ji} &= \left[\begin{matrix}
     -1 & 1
    \end{matrix}\right ] \bm{y}_l
    +\alpha_l\left[\begin{matrix}
     1 & 1
    \end{matrix}\right ] \bm{y}_l
    \end{split}
\end{equation}

The input of each node is the summation of the power injection from all the connected lines. This gives the representation in~\eqref{subeq:input_node} where the input $u_k$ is the summation of all the injected power flow as shown in~\eqref{eq:vector_power_flow}. Hence, the original dynamics~\eqref{eq:dyn_compact} are recovered by the following definition of modules.
\begin{subequations}\label{eq:Modular_Generator_4}
\begin{align}
\tau_{ k} \dot{\delta}_{k}& =-d_k(\delta_{k}-\delta_{k}^\ast)+ \left(P_{k}^{*}+u_{k}\right) \\
y_{ k} &=\delta_{k}\label{subeq:Modular_Generator_y_4}\\
  \begin{split}
      u_{ k} &=  \sum_{l\in  \mathcal{B}_k^+}
      \underbrace{\left[\begin{matrix}
     -1 & 1
    \end{matrix}\right ]}_{\Phi_{kl}} \bm{y}_l
    +\underbrace{\alpha_l\left[\begin{matrix}
     -1 & -1
    \end{matrix}\right ]}_{\Psi_{kl}} \bm{y}_l\\
    &\quad
    +\sum_{l\in  \mathcal{B}_k^-}
    \underbrace{\left[\begin{matrix}
     1 & -1
    \end{matrix}\right ]}_{\Phi_{kl}} \bm{y}_l
    +\underbrace{\alpha_l\left[\begin{matrix}
     -1 & -1
    \end{matrix}\right ]}_{\Psi_{kl}} \bm{y}_l\label{subeq:input_node}
    \end{split}
\end{align}
\end{subequations}

Due to the page limit, we supplement these detailed analysis in the longer online version~\cite{cui2022equilibrium}.

\subsection{Examples illustrating the structural properties of the proposed design}\label{app:example}

The following two examples provide more details on how the proposed method works. \\
\textbf{Two-bus network.} The first example is a radial network, where two buses are connected by a line as shown in Figure~\ref{fig:toy_example_bus2}. 
\begin{figure}[H]
    \centering
    \includegraphics[scale=0.5]{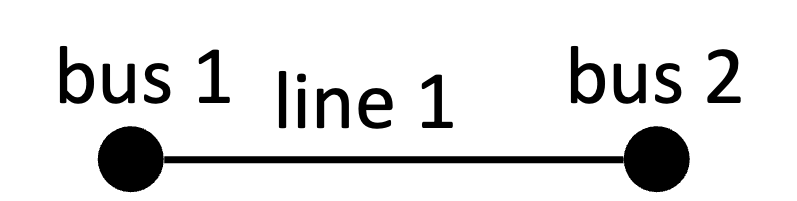}
    \caption{A two-bus (radial) network. The indices are $k=\{1,2\}$ for the buses and $l=\{3\}$ for the lines.}
    \label{fig:toy_example_bus2}
\end{figure}

We specify the input and output of this system by definition of modules given in~\eqref{eq:Modular_Generator} and~\eqref{eq:Modular_Line}. For each node, the input
$u_1=-p_{12}$, $u_2=-p_{21}$,and the output $y_1=\delta_1$, $y_2=\delta_2$.
For line 1, define power flow from bus 1 to bus 2 as the positive direction. We associate an index $l=3$ for this line. Then the input 
\begin{equation}
\begin{split}
    &\bm{u}_3=\begin{bmatrix}   \delta_1-\delta_2&  \delta_2-\delta_1 \end{bmatrix}^\top =\underbrace{\left[\begin{matrix}
     1 \\-1
    \end{matrix}\right ]}_{\Phi_{31}} \delta_1
    +
    \underbrace{\left[\begin{matrix}
     -1 \\1
    \end{matrix}\right ]}_{\Phi_{32}} \delta_2
\end{split}    
\end{equation}

and the output
\begin{equation}
    \begin{split}
    \bm{y}_3&=\begin{bmatrix}   \frac{\left (g_3-g_3\cos(\delta_1-\delta_2) \right )}{2\alpha_3}+\frac{b_3\sin(\delta_1-\delta_2)}{2}
    \\
    \frac{\left (g_3-g_3\cos(\delta_2-\delta_1) \right )}{2\alpha_3}+\frac{b_3\sin(\delta_2-\delta_1)}{2} \end{bmatrix}
    \end{split}
\end{equation}
From the power flow in the line $l$ connecting bus $i$ and $j$ given by
\begin{equation}\label{eq: power_flow_toy} 
p_{ij}=g_l-g_l\cos(\delta_i-\delta_j)+b_l\sin(\delta_i-\delta_j),    
\end{equation}
we have
\begin{equation}
    \begin{split}
   &     -p_{12}=  \underbrace{\left[\begin{matrix}
     -1 & 1
    \end{matrix}\right ]}_{\Phi_{13}} \bm{y}_3
    +\underbrace{\alpha_3\left[\begin{matrix}
     -1 & -1
    \end{matrix}\right ]}_{\Psi_{13}} \bm{y}_3,\\
    &-p_{21}= \underbrace{\left[\begin{matrix}
     1 & -1
    \end{matrix}\right ]}_{\Phi_{23}} \bm{y}_3
    +\underbrace{\alpha_3\left[\begin{matrix}
     -1 & -1
    \end{matrix}\right ]}_{\Psi_{23}} \bm{y}_3. 
    \end{split}
\end{equation}
To investigate the stability of the whole interconnected system, we stack the input/output vectors in sequence as $\bm{u}:=(u_1, u_2, \bm{u}_3^\top)$ and $\bm{y}:=(y_1, y_2, \bm{y}_3^\top)$. Their interconnection is therefore
\begin{equation*}
\includegraphics[width=0.85\linewidth]{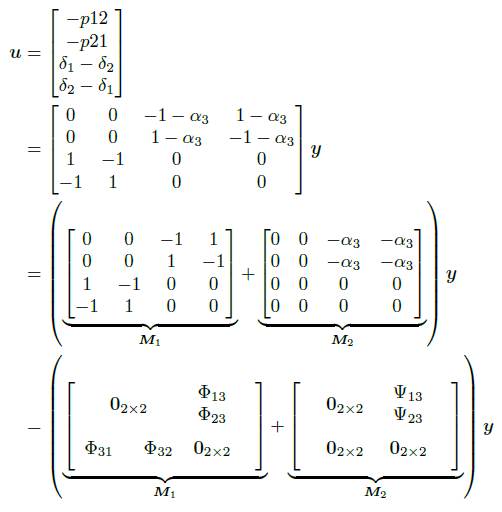}
\end{equation*}

\noindent \textbf{Three-bus meshed network.} Here we show an example of a meshed three-bus network in Fig.~\ref{fig:toy_example_bus3}. It is also included in the online version of the revised manuscript~\cite{cui2022equilibrium}. Nothing substantially changes from the two-bus network, except that the sizes of matrices get bigger. 
\begin{figure}[H]
    \centering
    \includegraphics[scale=0.5]{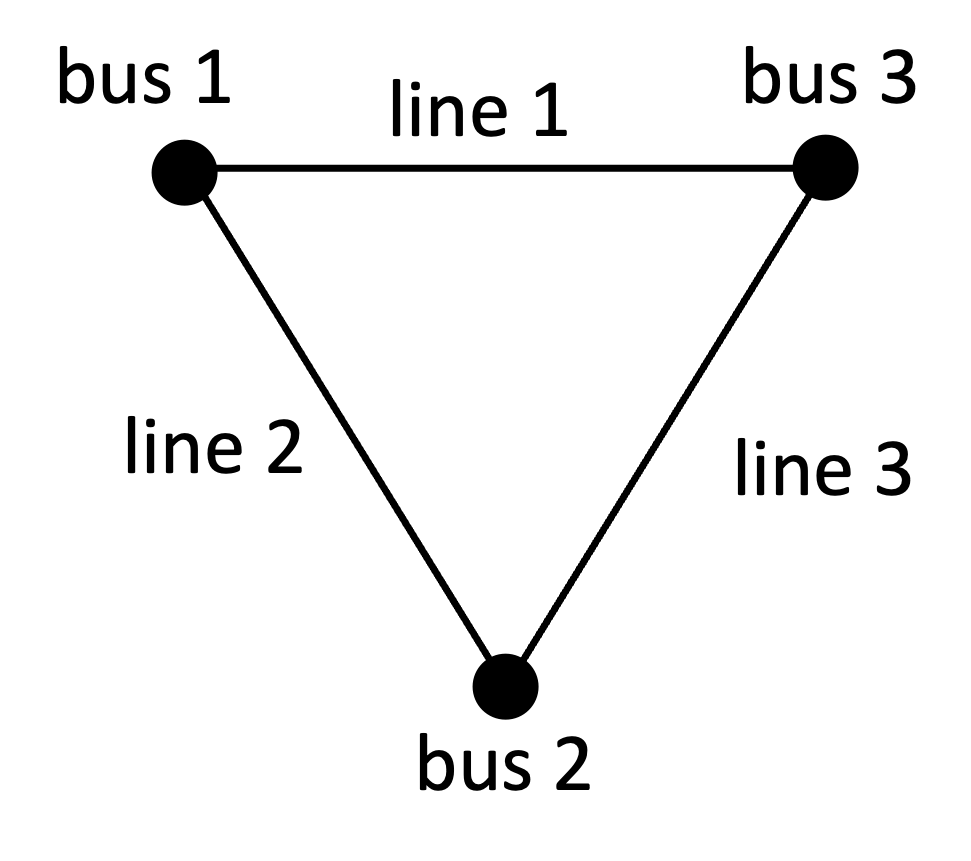}
    \caption{A three-bus meshed network. The indices are $k=\{1,2,3\}$ for the buses and $l=\{4,5,6\}$ for the lines.}
    \label{fig:toy_example_bus3}
\end{figure}

We specify the input and output of this system by definition of modulars given in~\eqref and~\eqref{eq:Modular_Generator_r2}. For each node, the input
$u_1=-p_{12}-p_{13}$, $u_2=-p_{21}-p_{23}$, $u_3=-p_{31}-p_{32}$, and the output $y_1=\delta_1$, $y_2=\delta_2$, $y_3=\delta_3$.

For line 1, define power flow from bus 1 to bus 2 as the positive direction. For line 2, define power flow from bus 1 to bus 3 as the positive direction.  For line 3, define power flow from bus 2 to bus 3 as the positive direction. We associate an index $l=4$, $l=5$, $l=6$ for Line1, Line2, Line3, respectively. Then the input 
\begin{equation}
\begin{split}
    &
        \bm{u}_4=\begin{bmatrix}   \delta_1-\delta_2&  \delta_2-\delta_1 \end{bmatrix}^\top =\underbrace{\left[\begin{matrix}
         1 \\-1
        \end{matrix}\right ]}_{\Phi_{41}} \delta_1
        +
        \underbrace{\left[\begin{matrix}
         -1 \\1
        \end{matrix}\right ]}_{\Phi_{42}} \delta_2\\
    &
        \bm{u}_5=\begin{bmatrix}   \delta_1-\delta_3&  \delta_1-\delta_3 \end{bmatrix}^\top =\underbrace{\left[\begin{matrix}
         1 \\-1
        \end{matrix}\right ]}_{\Phi_{51}} \delta_1
        +
        \underbrace{\left[\begin{matrix}
         -1 \\1
        \end{matrix}\right ]}_{\Phi_{53}} \delta_3\\    
    &
        \bm{u}_6=\begin{bmatrix}   \delta_2-\delta_3&  \delta_3-\delta_2 \end{bmatrix}^\top =\underbrace{\left[\begin{matrix}
         1 \\-1
        \end{matrix}\right ]}_{\Phi_{62}} \delta_2
        +
        \underbrace{\left[\begin{matrix}
         -1 \\1
        \end{matrix}\right ]}_{\Phi_{63}} \delta_3
\end{split}    
\end{equation}

and the output
\begin{equation}
    \begin{split}
&
    \bm{y}_4=\begin{bmatrix}   \frac{\left (g_4-g_4\cos(\delta_1-\delta_2) \right )}{2\alpha_4}+\frac{b_4\sin(\delta_1-\delta_2)}{2}
    \\
    \frac{\left (g_4-g_4\cos(\delta_2-\delta_1) \right )}{2\alpha_4}+\frac{b_4\sin(\delta_2-\delta_1)}{2} \end{bmatrix}\\
&
    \bm{y}_5=\begin{bmatrix}   \frac{\left (g_5-g_5\cos(\delta_1-\delta_3) \right )}{2\alpha_5}+\frac{b_5\sin(\delta_1-\delta_3)}{2}
    \\
    \frac{\left (g_5-g_5\cos(\delta_3-\delta_1) \right )}{2\alpha_5}+\frac{b_5\sin(\delta_3-\delta_1)}{2} \end{bmatrix}\\
&
    \bm{y}_6=\begin{bmatrix}   \frac{\left (g_6-g_6\cos(\delta_2-\delta_3) \right )}{2\alpha_3}+\frac{b_6\sin(\delta_2-\delta_3)}{2}
    \\
    \frac{\left (g_6-g_6\cos(\delta_3-\delta_2) \right )}{2\alpha_6}+\frac{b_6\sin(\delta_3-\delta_2)}{2} \end{bmatrix}
    \end{split}
\end{equation}
From the power flow in the line $l$ connecting bus $i$ and $j$ given by
\begin{equation}\label{eq: power_flow_toy_2}
p_{ij}=g_l-g_l\cos(\delta_i-\delta_j)+b_l\sin(\delta_i-\delta_j),    
\end{equation}
we have
\begin{equation*}
    \begin{split}
&     -p_{12}=  \underbrace{\left[\begin{matrix}
     -1 & 1
    \end{matrix}\right ]}_{\Phi_{14}} \bm{y}_4
    +\underbrace{\alpha_4\left[\begin{matrix}
     -1 & -1
    \end{matrix}\right ]}_{\Psi_{14}} \bm{y}_4
    \text{  and  }\\
   & -p_{21}= \underbrace{\left[\begin{matrix}
     1 & -1
    \end{matrix}\right ]}_{\Phi_{24}} \bm{y}_4
    +\underbrace{\alpha_4\left[\begin{matrix}
     -1 & -1
    \end{matrix}\right ]}_{\Psi_{24}} \bm{y}_4, \\
&     -p_{13}=  \underbrace{\left[\begin{matrix}
     -1 & 1
    \end{matrix}\right ]}_{\Phi_{15}} \bm{y}_5
    +\underbrace{\alpha_5\left[\begin{matrix}
     -1 & -1
    \end{matrix}\right ]}_{\Psi_{15}} \bm{y}_5
    \text{  and  }\\
   & -p_{31}= \underbrace{\left[\begin{matrix}
     1 & -1
    \end{matrix}\right ]}_{\Phi_{35}} \bm{y}_5
    +\underbrace{\alpha_5\left[\begin{matrix}
     -1 & -1
    \end{matrix}\right ]}_{\Psi_{35}} \bm{y}_5, \\
&     -p_{23}=  \underbrace{\left[\begin{matrix}
     -1 & 1
    \end{matrix}\right ]}_{\Phi_{26}} \bm{y}_6
    +\underbrace{\alpha_6\left[\begin{matrix}
     -1 & -1
    \end{matrix}\right ]}_{\Psi_{26}} \bm{y}_6
    \text{  and  }\\
    &-p_{32}= \underbrace{\left[\begin{matrix}
     1 & -1
    \end{matrix}\right ]}_{\Phi_{36}} \bm{y}_6
    +\underbrace{\alpha_6\left[\begin{matrix}
     -1 & -1
    \end{matrix}\right ]}_{\Psi_{36}} \bm{y}_6
    \end{split}
\end{equation*}
To investigate the stability of the whole interconnected system, we stack the input/output vectors in sequence as $\bm{u}:=(u_1, u_2, u_3,\bm{u}_4^\top,\bm{u}_5^\top,\bm{u}_6^\top)$ and $\bm{y}:=(y_1, y_2, y_3, \bm{y}_4^\top, \bm{y}_5^\top, \bm{y}_6^\top)$. Their interconnection is shown in~\eqref{eq:interconnection_bus3}.

\begin{figure*}
\hrule
\small
\begin{equation}\label{eq:interconnection_bus3}
\includegraphics[width=.9\linewidth]{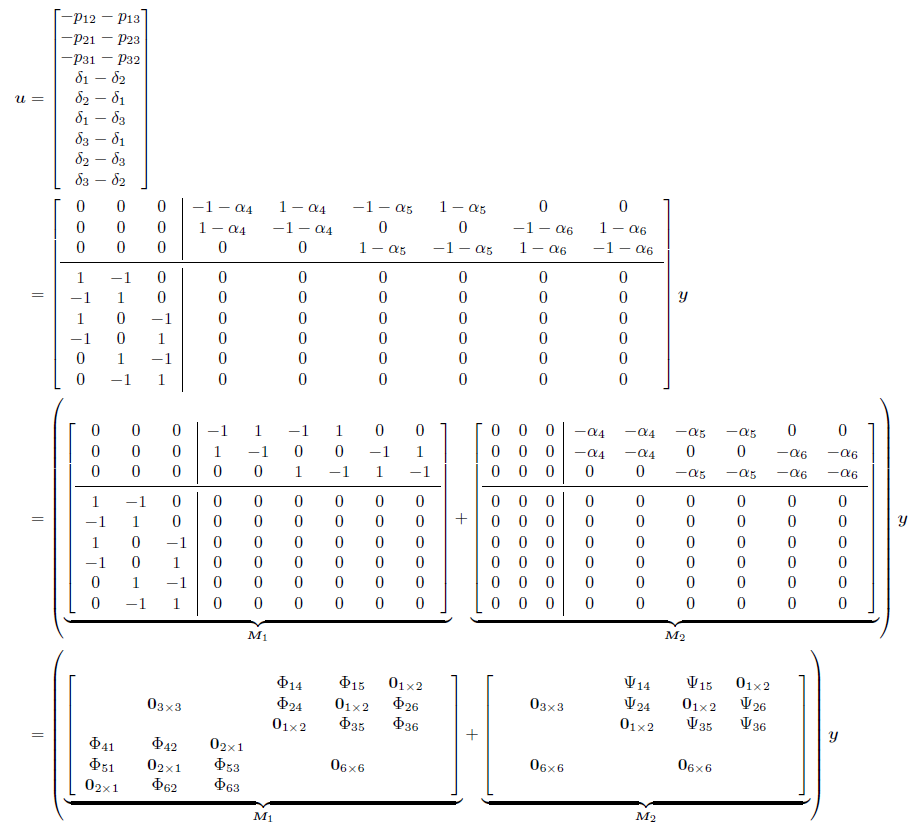}
\end{equation}
\normalsize
\hrule
\end{figure*}

In this case, we again have $\bm{M}_1$ being skew-symmetric and $\bm{M}_2$ being sparse. Since $\bm{M}_1$ and $\bm{M}_2$ come from the definition of input and output of the modules, the interconnection matrix always decouples into two parts. Therefore, $\bm{u}=\bm{M}\bm{y}$ holds for all network topologies. Such modular approach is also one of the main contribution of this paper. That is, the analysis is not impacted by the topology changes and the plug-in and the plug-out of devices.

\subsection{Proof of EIP Condition for Memoryless Systems}\label{app:app:memoryless}
The conclusion can also be found in~\cite{arcak2016networks} but a proof is not not explicitly given. For completeness, we supplement the proof as follows. 
First, we show sufficiency. Obviously, ~\eqref{eq:memoryless_EIP} holds if $u=u^*$. When  $u\neq u^*$, $y_l^\prime(u)\in
[0,\frac{1}{\epsilon_l}]$ gives
$0\leq \frac{(y_l(u)-y_l(u^*)}{u- u^*}\leq\frac{1}{\epsilon_l} $. From $0\leq \frac{(y_l(u)-y_l(u^*)}{u- u^*}$,  $(y_l(u)-y_l(u^*)$ is the same sign as $(u- u^*)$ and therefore $(y_l(u)-y_l(u^*))(u- u^*)\geq 0$. Then multiplying both side of $\frac{(y_l(u)-y_l(u^*)}{u- u^*}\leq\frac{1}{\epsilon_l}$ with $(y_l(u)-y_l(u^*))(u- u^*)$ gives~\eqref{eq:memoryless_EIP}.

Next, we show necessity. Without loss of generality, suppose $u^*<u$. By mean value theorem, there exists a $\Tilde{u}\in[u^*,u]$ such that $(y_l(u)-y_l(u^*)=y_l^\prime(\Tilde{u})(u- u^*)$. Then, $\left(u-u^*\right)\left(y_l(u)-y_l(u^*)\right)-\epsilon_l
\left(y_l(u)-y_l(u^*)\right)^2\geq 0$ gives $y_l^\prime(\Tilde{u})\left(u-u^*\right)^2-\epsilon_l
\left(y_l^\prime(\Tilde{u})\right)^2\left(u-u^*\right)^2\geq 0$. Since $u^*<u$ indicates $(u-u^*)^2>0$, we have $y_l^\prime(\Tilde{u})-\epsilon_l
\left(y_l^\prime(\Tilde{u})\right)^2\geq 0$ and hence $y_l^\prime(\Tilde{u})\in \left [0, \frac{1}{\epsilon_l}  \right ]$. This holds for any $u^*\in\mathcal{U}$, $u\in\mathcal{U}$. Hence,   $y_l^\prime(\Tilde{u})\in \left [0, \frac{1}{\epsilon_l}  \right ]$ for   $\Tilde{u}\in\mathcal{U}$. Namely, $y_l^\prime(u)\in \left [0, \frac{1}{\epsilon_l}  \right ]$ for   $u\in\mathcal{U}$.
\end{document}